\theoremstyle{definition}
\newtheorem{definition}{Definition}
\newtheorem{proposition}[definition]{Proposition}
\newtheorem{remark}[definition]{Remark}
\newcommand{\To}{{\to}}
\title{Superconnexivity Reconsidered\thanks{The research by Andreas Kapsner has been supported by the Deutsche Forschungsgemeinschaft (DFG, German Research Foundation), Project 436508789. Moreover, the research by Hitoshi Omori has been supported by a Sofja Kovalevskaja Award of the Alexander von Humboldt-Foundation, funded by the German Ministry for Education and Research. We would like to thank the audience in Bochum for their comments on an earlier version of this paper. In particular, our special thanks goes to Francesco Paoli for extremely fruitful discussions. Finally, but not the least, we would also like to thank the anonymous referees for their supportive, careful and helpful comments.}}
\author{Andreas Kapsner
	\institute{MCMP\\ LMU Munich \\ Munich, Germany}
	\email{Andreas.Kapsner@lrz.uni-muenchen.de}
	\and
	Hitoshi Omori
	\institute{Department of Philosophy I\\ Ruhr Univeristy Bochum\\
		Bochum, Germany}
	\email{Hitoshi.Omori@rub.de}
}
\begin{document}
\maketitle
	
\begin{abstract}
We reconsider the idea of \textit{superconnexivity}, an idea that has not received much attention so far. We inspect more closely the problems with the proposal that are responsible for this disregard. However, we also suggest a slight modification of the idea that has a much better chance of delivering the desired results, which we call \textit{super-bot-connexivity}. 
\end{abstract}
	
\section{Introduction}
	
In his paper “Strong Connexivity” (cf. \cite{Kapsner2012strong}), Andreas Kapsner claimed that the connexive principles were based on intuitions that made an amendment necessary. Connexive logic, up to that point, had largely been considered to be exhaustively characterized by the theses of Aristotle and Boethius\footnote{For an overview of connexive logic, see \cite{mccall2012history,sep-logic-connexive}.}
	
\begin{description}
\item [{\textsc{Aristotle}:}] {$\neg(A\To\neg A)$ and $\neg(\neg A\To A)$ are valid.}
\item [{\textsc{Boethius}:}] $(A\To B)\To\neg(A\To\neg B)$ and $(A\To\neg B)\To\neg(A\To B)$ are valid.
\end{description}
	
However, some connexive logics allowed for satisfiable instances of $(A\To  \neg A)$, as well as simultaneously satisfiable instances of $(A\To B)$ and $(A\To\neg B)$. Kapsner took that to go against the spirit of the connexive enterprise. To be able to judge those cases out of bounds, he suggested to add two unsatisfiablity clauses:

\begin{description} \label{Unsats}
\item [{\textsc{UnSat1}:}] In no model,{ $(A\To  \neg A)$ is satisfiable, and neither is $(\neg A \To   A)$. }
\item [{\textsc{UnSat2}:}] {In no model $(A\To B)$ and $(A\To\neg B)$ are satisfiable simultaneously (for any $A$ and $B$).}
\end{description}
	
He called logics that satisfy Aristotle, Boethius and the UnSat clauses \textit{strongly connexive}, those that only satisfied Aristotle and Boethius \textit{weakly connexive}.
	
To call for such clauses that prescribe unsatisfiabilities, of course, is a quite uncommon move in non-classical logic, and Kapsner suggested, in a very tentative way, that there might be an idea worth exploring that tries to push the requirement into the object language. He wrote:
	
\begin{quote}
Now, one might wonder whether the criterion of strong connexivity can be expressed in some manner in the object language itself, given that \textsc{Aristotle} and \textsc{Boethius} are not up to this task. 
		
A look at an analogous problem might bring us to some interesting ideas here. Classical logic and most non-classical logics validate the principle of \emph{explosion}, $(A\wedge\neg A){\rightarrow} B$. Let us remind ourselves of a philosophical argument that is sometimes given in favor of the validity of $(A\wedge\neg A){\rightarrow} B$. This validity, it is claimed, is the most we can do to express, in the object language itself, the thought that a contradiction is unsatisfiable. 
		
In analogy to this use of explosion to express the unsatisfiability of any contradiction, we might try to ask that $(A\rightarrow\neg A)\rightarrow B$ should be valid, in order to express in the object language that $A\rightarrow\neg A$ is unsatisfiable (and similarly for the rest of the connexive theses). Call a logic that validates all of these schemata and satisfies all the requirements for strong connexivity \emph{superconnexive}. (\cite[p.143]{Kapsner2012strong})
\end{quote}
	
However, as he immediately went on to note, adding this to a system with substitutivity of logical equivalents quickly leads to trouble. Even though the option of giving up substitutivity has been explored before in the realm of connexive logics (see \cite{Priest1999}), Kapsner did not develop the idea of superconnexivity any further in the paper, and it has not found many friends since.

In this paper, we want to explore whether something of the idea can be salvaged, even if we insist on keeping substitutivity. We argue that this is possible, even though we have to modify the idea a little bit. 
	
In order for this to amount to an interesting route for further exploration for connexivists, we will, as a preparatory move, begin \S\ref{Problems} by stating the condition Kapnser proposed in \cite{Kapsner2012strong} a little more carefully and extending it to other connexive theses. We will then immediately take a closer look at the problems these principles engender. In \S\ref{SuperBot}, we will introduce our idea of how to avoid these problems, which involves the introduction of a bottom constant. We will then draw several conclusions about how these new principles interact with others. This will be followed by \S\ref{Explosion} in which we focus on the interplay between our new conception of superconnexivity and Explosion, the principle that inspired the idea of superconnexivity in the first place. \S\ref{Abelard}, then, takes some deeper looks at two lesser known principles (Abelard and Aristotle's second thesis) that sometimes get discussed in the connexive literature. In \S\ref{SuperBot AT BT}, we explore the exact connections between the new and the old principles, Aristotle and Boethius. Finally, \S\ref{Summary} sums up our findings, and discusses briefly some future directions.
	
\section{Superconnexivity and its Troubles, More Carefully Examined}\label{Problems}
	
We first examine Super-Aristotle, and then turn to Super-Boethius. Kapsner's suggestion was to capture the unsatisfiability of $A\To \neg A$ by the validity of $(A\To \neg A)\To B$, leaving the question of how to treat the unsatisfiability of $\neg A\To A$ open. Though it is rather obvious, let us spell it out and give the principles names to facilitate the upcoming discussion:
	
\begin{description}
\item [{\textsc{Super-Aristotle}:}]  {$(A\To\neg A)\To B$ and $(\neg A\To A)\To B$ are valid.}
\end{description}
	
It will turn out that there is a relevant difference in the behavior of these two variations, so they should be mentioned explicitly. Where we mean either the first or the second in particular, we will write SA1 and SA2, respectively.
	
We also want to suggest a way of giving a superconnexive version of Boethius / UnSat2, something Kapsner failed to do in \cite{Kapsner2012strong}. Here is our suggestion:
	
\begin{description}
\item [{\textsc{Super-Boethius}:}]  {$(A\To  B)\To ((A\To \neg B)\To C)$ and $(A\To \neg B)\To ((A\To B)\To C)$ are valid.\footnote{An anonymous referee asks why this, rather than what we will call \textsc{Super-Abelard} in \S\ref{Abelard}, is our suggested rendering of \textsc{UnSat2}. This is a very good question, and, really, the original sin here lies with Kapsner's \cite{Kapsner2012strong}. As he is one of the authors of the present papers, this does not get us off the hook completely. At the time of writing, and still today, Kapsner was unable to phrase an UnSat principle that was clearly the correlate of Boethius, but not of Abelard (he thought and still thinks that both are motivated by very similar intuitions that can be phrased as \textsc{UnSat2}). That we now focus on \textsc{Super-Boethius} rather than \textsc{Super-Abelard} reflects that connexive logic as a modern research area places more emphasis on Boethius than Abelard.}}
\end{description}
	
Again, a careful examination advises being explicit about the two different variations, we will call them SB1 and SB2 below.
	
\subsection{Problems: Super-Aristotle}
Here is the problem with Super-Aristotle already noted in \cite{Kapsner2012strong}: As an instance of the axiom, substituting $\neg (A\To \neg A)$ for $B$, we obtain $(A\To \neg A)\To \neg (A\To \neg A)$. This, however, is an instance of exactly the thing that is to be avoided in view of UnSat 1. 
	
While this is true we'd like to note that, in fact, the situation is worse still than that, if we have modus ponens in place (and we will assume throughout this essay that we do). Indeed, we have the following.
	
\begin{proposition}
$(A\To \neg A)\To B$, MP and SUB is trivial.
\end{proposition}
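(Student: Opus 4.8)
The plan is to show the system proves every formula. The engine of the collapse is the single observation that the axiom scheme SA1 forces a provable instance of $\psi\To\neg\psi$, and any such instance can be fed back into SA1 as an antecedent, so that one further application of modus ponens discharges an arbitrary conclusion.

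Concretely, I would proceed in three steps. First, using SUB to instantiate SA1 with $B:=\neg(A\To\neg A)$, I get that $(A\To\neg A)\To\neg(A\To\neg A)$ is provable; this is exactly the formula already isolated in the paragraph preceding the proposition. Second, I would note that this provable formula is itself of the form $\psi\To\neg\psi$, namely with $\psi:=(A\To\neg A)$. Hence a second instance of SA1 — now taking its schematic ``$A$'' to be $\psi$ and its ``$B$'' to be an arbitrary formula $C$ — yields that $\bigl((A\To\neg A)\To\neg(A\To\neg A)\bigr)\To C$ is provable. Third, applying MP to the two provable formulas obtained in the first two steps gives $C$. Since $C$ was arbitrary, every formula is a theorem, i.e.\ the system is trivial.

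There is no genuine obstacle here; the proof is a handful of lines. The only point requiring care is the bookkeeping of substitution instances, and in particular the conceptual move that the ``embarrassing'' theorem $(A\To\neg A)\To\neg(A\To\neg A)$ — which already violates the spirit of \textsc{UnSat1} — is not just an isolated anomaly but can be re-read as an instance of $\psi\To\neg\psi$ and thereby used to trigger explosion on \emph{every} formula. It is also worth remarking that a single propositional variable suffices throughout (one may take $A$ to be any fixed atom), so the collapse is not an artefact of a syntactically rich language, and that no properties of $\neg$ or $\To$ beyond MP and SUB are used.
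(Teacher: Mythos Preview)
Your proof is correct and follows exactly the same three-line argument as the paper: instantiate SA1 to obtain $(A\To\neg A)\To\neg(A\To\neg A)$, recognise this as an instance of the antecedent of SA1, and detach an arbitrary formula by MP. Your additional remarks about the bookkeeping of substitutions and the sufficiency of a single propositional atom are accurate and harmless embellishments.
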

\begin{proof}
The proof runs as follows.
	
1 \ \ \ $(A\To \neg A)\To \neg (A\To \neg A)$ \hfill[SA1]
		
2 \ \ \ $((A\To \neg A)\To \neg (A\To \neg A))\To B$ \hfill[SA1]

3 \ \ \ $B$ \hfill[1, 2, MP]

\noindent This completes the proof.
\end{proof}

This assumes the first part of Super-Aristotle above, $(A\To\neg A)\To B$. Interestingly, and somewhat surprisingly, we do \emph{not} get the same sort of trouble with $(\neg A\To A)\To B$ if we assume only MP and SUB. 
	
\begin{proposition}
$(\neg A\To A)\To B$, MP and SUB is non-trivial.
\end{proposition}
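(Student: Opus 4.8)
The plan is to establish non-triviality syntactically, by showing that the theorems of the system are \emph{precisely} the instances of the axiom, and hence that no propositional variable is derivable. This is meant to sit in sharp contrast with the proof of the previous proposition, so it is worth making that contrast explicit first. Writing $\mathrm{Ax}_1$ for the set of all substitution instances of \textsc{SA1}, i.e.\ the formulas of the form $(\varphi\To\neg\varphi)\To\psi$, the reason that proof goes through is that $\mathrm{Ax}_1$ is \emph{not} closed under \textsc{MP}: the instance obtained by taking $\psi:=\neg(\varphi\To\neg\varphi)$ is the formula $(\varphi\To\neg\varphi)\To\neg(\varphi\To\neg\varphi)$, which is \emph{also} of the form $X\To\neg X$ (with $X:=\varphi\To\neg\varphi$), so it can serve as the minor premise of an \textsc{MP} step whose major premise is another member of $\mathrm{Ax}_1$.

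The first thing I would check is that this self-feeding is impossible for \textsc{SA2}. Let $\mathrm{Ax}_2$ be the set of substitution instances of \textsc{SA2}, i.e.\ the formulas of the form $(\neg\varphi\To\varphi)\To\psi$. I claim no member of $\mathrm{Ax}_2$ is of the form $\neg\chi\To\chi$: such a formula has main connective $\To$ with left immediate subformula $\neg\chi$, a negation, whereas a member of $\mathrm{Ax}_2$ has left immediate subformula $\neg\varphi\To\varphi$, an implication --- and a negation is never syntactically identical to an implication. Consequently there are no formulas $\chi,\theta$ with $\chi\in\mathrm{Ax}_2$ and $\chi\To\theta\in\mathrm{Ax}_2$, so $\mathrm{Ax}_2$ \emph{is} closed under \textsc{MP}, vacuously. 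Since $\mathrm{Ax}_2$ is obviously also closed under \textsc{SUB}, and the theorems of the system form the least \textsc{SUB}- and \textsc{MP}-closed set containing the axiom, the theorems are exactly $\mathrm{Ax}_2$.

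To finish, it suffices to name a formula not in $\mathrm{Ax}_2$: a propositional variable $p$ will do, being atomic rather than of the form $(\neg\varphi\To\varphi)\To\psi$; so $p$ is not derivable and the system is non-trivial. The only point requiring a touch of care is fixing the reading of the system --- that \textsc{SUB} is uniform substitution and \textsc{MP} is the only other rule --- but with that settled the argument above is essentially complete, and the main ``obstacle'' is really just noticing the negation-versus-implication mismatch that blocks the \textsc{SA1} trick.

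If one prefers a semantic witness, the same conclusion follows from a tiny matrix: take $\{0,1\}$ with $1$ designated, read $\neg$ as the constant function $1$, and set $0\To x=1$ and $1\To x=0$. Every instance of $(\neg A\To A)\To B$ then evaluates to $0\To v(B)=1$; \textsc{MP} is sound, since no valid $\varphi$ admits a valid $\varphi\To\psi$ (as $1\To v(\psi)=0$); and $p$ is falsified by $p\mapsto 0$. I would lead with the syntactic proof, though, precisely because it isolates \emph{why} \textsc{SA1} and \textsc{SA2} come apart.
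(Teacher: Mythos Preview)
Your proof is correct and takes a genuinely different route from the paper's. The paper argues semantically via a three-valued matrix (values $\mathbf{1},\mathbf{i},\mathbf{0}$ with $\mathbf{1}$ designated, a specific $\neg$ and $\To$) in which $\neg A\To A$ always takes value $\mathbf{0}$, so the axiom is valid, MP preserves $\mathbf{1}$, and $\neg\neg p\To p$ witnesses non-triviality. Your primary argument is syntactic: by unique readability, no instance of $(\neg\varphi\To\varphi)\To\psi$ has a \emph{negation} as its antecedent, so no formula of the form $\neg\chi\To\chi$ lies in $\mathrm{Ax}_2$; hence MP is vacuously admissible on $\mathrm{Ax}_2$, the theorem set is exactly $\mathrm{Ax}_2$, and any atom is underivable. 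This has the virtue of \emph{explaining} the asymmetry with \textsc{SA1} --- it isolates precisely the syntactic mismatch that blocks the self-feeding trick --- rather than merely exhibiting a countermodel. Your two-element matrix alternative is also correct and strictly smaller than the paper's; the only thing the paper's three-valued choice buys is a non-constant negation, which one might prefer if the matrix is meant to look like a plausible semantics rather than a bare consistency witness.
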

\begin{proof}
Consider the following truth tables with $\mathbf{1}$ as the only designated value.
\begin{center}
{\small
\begin{tabular}{c|c}
& $\neg $    \\ \hline
$\mathbf{1}$ & $\mathbf{i}$  \\
$\mathbf{i}$ & $\mathbf{1}$\\
$\mathbf{0}$ & $\mathbf{1}$  \\  
\end{tabular}
\quad
\begin{tabular}{c|cccccc}
$\To$ & $\mathbf{1}$ & $\mathbf{i}$ & $\mathbf{0}$ \\   \hline
$\mathbf{1}$ & $\mathbf{1}$ & $\mathbf{0}$ & $\mathbf{0}$ \\ 
$\mathbf{i}$ & $\mathbf{0}$ & $\mathbf{1}$ & $\mathbf{0}$ \\  
$\mathbf{0}$ & $\mathbf{1}$ & $\mathbf{1}$ & $\mathbf{1}$ \\  
\end{tabular}
}
\end{center}
Then, $\neg A\To A$ always receives the value $\mathbf{0}$, and thus $(\neg A\To A)\To B$ is valid. Moreover, MP preserves the value $\mathbf{1}$. However, note that $\neg \neg p\To p$ receives the value $\mathbf{0}$ when $p$ is assigned the value $\mathbf{0}$, and thus the formula is not valid and the system is non-trivial.
\end{proof}
	
Still, we obtain the following results.
	
\begin{proposition}
$(\neg A\To A)\To B$, MP and SUB together with double negation elimination (DNE) and the rule of Transitivity (Trans.) is trivial.
\end{proposition}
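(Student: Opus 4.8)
The plan is to replay the self-referential detonation used in the earlier proof that $(A\To\neg A)\To B$ with MP and SUB is trivial, but adapted to the shape of SA2. There the crucial point was that $(A\To\neg A)\To\neg(A\To\neg A)$ is at once an instance of SA1 (take $B:=\neg(A\To\neg A)$) and a formula of the form $\varphi\To\neg\varphi$, so that a second application of SA1 to $\varphi:=(A\To\neg A)$ yields $B$. For SA2 the parallel move calls instead for a \emph{theorem} of the form $\neg\psi\To\psi$: given one, the instance of SA2 obtained by substituting $\psi$ for the schematic antecedent is $(\neg\psi\To\psi)\To B$, and a single modus ponens delivers an arbitrary $B$. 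So everything reduces to manufacturing such a $\psi$, and it is exactly here that DNE and Trans.\ must come in, since the preceding Proposition shows that MP and SUB alone do not suffice.

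The $\psi$ I would take is $\neg(\neg A\To A)$, so that $\neg\psi$ is $\neg\neg(\neg A\To A)$. DNE, in the form $\neg\neg C\To C$ instantiated at $C:=(\neg A\To A)$, gives $\neg\neg(\neg A\To A)\To(\neg A\To A)$; SA2 with $B:=\neg(\neg A\To A)$ gives $(\neg A\To A)\To\neg(\neg A\To A)$; chaining the two by Trans.\ produces $\neg\neg(\neg A\To A)\To\neg(\neg A\To A)$, i.e.\ $\neg\psi\To\psi$. Applying SA2 once more, now substituting the compound formula $\neg(\neg A\To A)$ for the schematic antecedent, yields $(\neg\neg(\neg A\To A)\To\neg(\neg A\To A))\To B$, and MP on the last two lines gives $B$. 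In five lines: (1)~$\neg\neg(\neg A\To A)\To(\neg A\To A)$ by DNE; (2)~$(\neg A\To A)\To\neg(\neg A\To A)$ by SA2; (3)~$\neg\neg(\neg A\To A)\To\neg(\neg A\To A)$ from 1 and 2 by Trans.; (4)~$(\neg\neg(\neg A\To A)\To\neg(\neg A\To A))\To B$ by SA2; (5)~$B$ from 3 and 4 by MP.

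I do not expect a real obstacle here; the one thing to watch is the bookkeeping of the substitutions into SA2, in particular checking that lines (2) and (4) are bona fide instances — a schematic letter may of course be replaced by a compound formula that itself mentions $A$, so nothing illicit happens. It also seems worth flagging, against the backdrop of the preceding non-triviality result, why \emph{both} of the new principles earn their keep: with only MP and SUB the displayed three-valued matrix refutes $\neg\neg p\To p$, so some bridge back from $\neg\neg(\neg A\To A)$ to $\neg A\To A$ is indispensable, and without transitivity there is no way to splice such a bridge onto the SA2-instance $(\neg A\To A)\To\neg(\neg A\To A)$ to close the loop.
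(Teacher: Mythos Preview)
Your proof is correct and matches the paper's own proof line for line: the same five steps using DNE, SA2 (twice), Trans., and MP, with the same choice $\psi=\neg(\neg A\To A)$. Nothing to add.
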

	
\begin{proof}
The proof runs as follows.

1 \ \ \ $\neg \neg (\neg A\To A)\To (\neg A\To A)$ \hfill[DNE]
		
2 \ \ \ $(\neg A\To A)\To \neg (\neg A\To A)$ \hfill[SA2]
		
3 \ \ \ $\neg \neg (\neg A\To A)\To \neg (\neg A\To A)$ \hfill[1, 2, Trans.]

4 \ \ \ $(\neg \neg (\neg A\To A)\To \neg (\neg A\To A))\To B$ \hfill[SA2]
		
5 \ \ \ $B$ \hfill[3, 4, MP]

\noindent This completes the proof.
\end{proof}
	
\subsection{Problems: Super-Boethius}
For Super-Boethius, we will be in the same kind of trouble for one of the two versions. 

\begin{proposition}
$(A\To  B)\To ((A\To \neg B)\To C)$, MP and SUB is trivial.
\end{proposition}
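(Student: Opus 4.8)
The plan is to run a collapse argument in the same spirit as the one just given for SA1, but with one extra round-trip, because --- unlike in the SA1 case --- the \emph{diagonal} instance of SB1 is not immediately of the shape SB1 itself consumes. The key observation is that SB1 is a theorem of implicational form: abbreviating $\alpha:=A\To B$ and $\beta:=(A\To\neg B)\To C$, the axiom $(A\To B)\To((A\To\neg B)\To C)$ is literally $\alpha\To\beta$. So a theorem of the form $\alpha\To\beta$ is on hand whose antecedent $\alpha$ and consequent $\beta$ can be substituted back into SB1.

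First I would take the SB1 instance obtained by putting $\alpha$ for $A$, $\beta$ for $B$, and an arbitrary $D$ for $C$, namely $(\alpha\To\beta)\To((\alpha\To\neg\beta)\To D)$. Since $\alpha\To\beta$ is itself just SB1, one application of MP gives $(\alpha\To\neg\beta)\To D$, and this is a theorem for \emph{every} formula $D$. Writing $P:=\alpha\To\neg\beta$, we obtain in particular both $P\To E$ and $P\To\neg E$ (take $D:=E$ and $D:=\neg E$ for a fresh variable $E$).

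A final application of SB1, now with $P$ for $A$, $E$ for $B$, and an arbitrary $F$ for $C$, yields $(P\To E)\To((P\To\neg E)\To F)$; two applications of MP --- first against $P\To E$, then against $P\To\neg E$ --- produce $F$. As $F$ is arbitrary, the system is trivial, and, as with SA1, only MP and SUB (no DNE, no transitivity) were used.

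The step I expect to be the real obstacle, and the reason the derivation is a touch longer than the SA1 one, is precisely locating the right self-application. For SA1 the diagonal instance $(A\To\neg A)\To\neg(A\To\neg A)$ is already of the form $X\To\neg X$, exactly the antecedent SA1 demands, so a single re-substitution closes the argument. For SB1 the analogous diagonal instance is \emph{not} of the paired form ``$X\To Y$ together with $X\To\neg Y$'' that SB1 requires; the way out is to exploit that SB1 is implicational, so that one extra pass through SB1 manufactures a single formula $P$ for which both $P\To E$ and $P\To\neg E$ are derivable, after which SB1 explodes. Verifying that each line is a genuine SUB-instance of SB1 is then routine.
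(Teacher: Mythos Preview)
Your proof is correct and is essentially the same argument as the paper's: with $\alpha:=A\To B$, $\beta:=(A\To\neg B)\To C$, and $P:=\alpha\To\neg\beta$, both you and the paper feed the theorem $\alpha\To\beta$ back into SB1 (with $A\mapsto\alpha$, $B\mapsto\beta$) to obtain $P\To D$ for arbitrary $D$, then take $D$ and $\neg D$ and run one more SB1-instance at $A\mapsto P$ to conclude an arbitrary formula. The only differences are notational (your $E,F$ versus the paper's $D,E$) and expository.
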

\begin{proof}
The proof runs as follows.
		
1 \ \ \ $(A\To  B)\To ((A\To \neg B)\To C)$ \hfill[SB1]
		
2 \ \ \ $((A\To  B)\To ((A\To \neg B)\To C))\To (((A\To  B)\To \neg ((A\To \neg B)\To C))\To D)$ \hfill[SB1]
		
3 \ \ \ $((A\To  B)\To \neg ((A\To \neg B)\To C))\To D$ \hfill[1, 2, MP]
		
4 \ \ \ $((A\To  B)\To ((A\To \neg B)\To C))\To (((A\To  B)\To \neg ((A\To \neg B)\To C))\To \neg D)$ \hfill[SB1]
		
5 \ \ \ $((A\To  B)\To \neg ((A\To \neg B)\To C))\To \neg D$ \hfill[1, 4, MP]
		
6 \ \ \ $E$ \hfill[3, 5, SB1, MP]
		
\noindent This completes the proof.
\end{proof}
	
We do not know yet if the same triviality proof is possible with the other version, but we will immediately obtain the following by having double negation introduction (DNI).
	
\begin{proposition}
$(A\To  \neg B)\To ((A\To B)\To C)$, MP and SUB with DNI is trivial.
\end{proposition}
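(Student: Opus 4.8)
The plan is to reduce the claim to a result already in hand: the proposition above to the effect that $(A\To\neg A)\To B$, MP and SUB is trivial. So it suffices to derive every instance of $(A\To\neg A)\To B$ — i.e., all of SA1 — from SB2, MP, SUB and DNI, and then appeal to that proposition (or simply re-run its two-line argument).

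The key step, and essentially the only step, is a piece of shape-matching. An instance of DNI is $A\To\neg\neg A$, and, read as $A\To\neg(\neg A)$, it already has the syntactic form $(A\To\neg B)$ that the antecedent of SB2 demands. Hence I would take the instance of SB2 in which $B$ is instantiated to $\neg A$, namely $(A\To\neg\neg A)\To((A\To\neg A)\To C)$, and detach its antecedent by MP against the DNI instance $A\To\neg\neg A$. This yields $(A\To\neg A)\To C$ for arbitrary $A$ and $C$ — precisely SA1 — and triviality then follows from the earlier proposition. If a self-contained derivation is preferred, one notes that $(A\To\neg A)\To\neg(A\To\neg A)$ and $((A\To\neg A)\To\neg(A\To\neg A))\To D$ are both instances of the schema just obtained (the second reading $(A\To\neg A)$ itself into the antecedent slot), so a single application of MP produces an arbitrary $D$.

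I do not anticipate a genuine obstacle; the whole content is the middle paragraph. The point worth being careful about is exactly why DNI is needed. Unlike SB1 — every instance of which is already of the form $(A\To B)$ and can therefore be fed back into SB1's own antecedent — every instance of SB2 carries an implication, not a negation, in consequent position, whereas SB2's antecedent slot requires an implication with a negated consequent; so SB2 cannot be fed back into itself directly. DNI is precisely what supplies a theorem of the missing shape, which is why the argument goes through once DNI is added, even though the status of plain SB2 (with only MP and SUB) is left open above.
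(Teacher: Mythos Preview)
Your proposal is correct and matches the paper's proof exactly: instantiate SB2 with $B:=\neg A$, detach against the DNI instance $A\To\neg\neg A$ by MP to obtain SA1, and then invoke the earlier triviality result for SA1. The additional commentary on why DNI is needed is accurate and goes a bit beyond what the paper spells out, but the derivation itself is the same.
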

\begin{proof}
It suffices to derive $(A\To\neg A)\To B$, and this can be done as follows.
		
1 \ \ \ $(A\To  \neg \neg A)\To ((A\To \neg A)\To B)$ \hfill[SB2]
		
2 \ \ \ $(A\To\neg A)\To B$ \hfill[1, DNI, MP]
		
\noindent This completes the proof.
\end{proof}
	
In sum, the superconnexive versions of Aristotle and Boethius theses either immediately trivialize with MP and SUB, or after adding at most a couple of plausible (though not uncontested) principles like the double negation laws and transitivity.

\section{An alternative: Super-Bot-Connexivity}\label{SuperBot}
	
So, we can not use superconnexivity as it stands, unless we are willing to do without the principles involved in the proofs in the last section. As announced above, we want to explore a modification of the idea that is logically not as destructive, but does preserve the conceptual core of the original idea.
	
\subsection{Our suggestion}
Our idea is to replace the arbitrary $B$ with a bottom, as follows:

\begin{description}
\item [{\textsc{Super-Bot-Aristotle}:}]  {$(A\To\neg A)\To \bot$ and $(\neg A\To A)\To \bot$ are valid.}
\end{description}
We will refer to the first as S$\bot$A1, to the second as S$\bot$A2.
	
\medskip
	
The hope is that by taming the behavior of $\bot$ just enough to retain a sense of absurdity but avoid triviality of the logical systems the principle is added to, we can make sense of the original super-connexive idea in this way.
	
Here is how the corresponding forms of Super-Boethius would look like in accordance with our suggestion above:
\begin{description}
\item [{\textsc{Super-Bot-Boethius}:}]  {$(A\To B)\To ((A\To \neg B)\To \bot)$ and $(A\To \neg  B)\To ((A\To B)\To \bot)$ are valid.}
\end{description}
We will refer to the first as S$\bot$B1, to the second as S$\bot$B2.
	
\subsection{What $\bot$ is (not)}
	
What properties can we give $\bot$, which ones can't we have? Let us begin with another triviality result involving $\bot\To A$, which we will call EFQ (\textit{ex falso quodlibet}).

\begin{proposition}
$\bot\To A$, S$\bot$A1, MP and SUB is trivial.
\end{proposition}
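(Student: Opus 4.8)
The plan is to show that this combination collapses almost at once, by feeding $\bot$ to itself. The key observation is that EFQ, read as a schema, licenses in particular the instance $\bot\To\neg\bot$ (substitute $\neg\bot$ for the schematic letter), and that this formula has exactly the shape $X\To\neg X$ to which S$\bot$A1 can be applied.

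Concretely, I would proceed as follows. First, obtain $\bot\To\neg\bot$ as an instance of EFQ. Second, obtain $(\bot\To\neg\bot)\To\bot$ as the instance of S$\bot$A1 got by substituting $\bot$ for $A$ in $(A\To\neg A)\To\bot$. By MP these two yield $\bot$ outright. Finally, with $\bot$ now proved, one further appeal to EFQ in the form $\bot\To A$ together with one more application of MP delivers an arbitrary $A$, so the system proves everything. In skeletal form: (1)~$\bot\To\neg\bot$ [EFQ]; (2)~$(\bot\To\neg\bot)\To\bot$ [S$\bot$A1]; (3)~$\bot$ [1,2,MP]; (4)~$\bot\To A$ [EFQ]; (5)~$A$ [3,4,MP].

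I do not anticipate a real obstacle here; the only point to be careful about is that the instance $\bot\To\neg\bot$ is genuinely available, which is immediate once EFQ is treated as a schema (or, if one insists on a single EFQ axiom, recovered from it via SUB). It is worth stressing — and I would flag this in the remark following the proof — that, in contrast to the triviality arguments for $(A\To\neg A)\To B$ given earlier, this derivation uses neither transitivity nor any double-negation principle: $\bot$ does all the work by being simultaneously explosive (through EFQ) and ``self-refuting'' in the sense that $\bot\To\neg\bot$ is a theorem, which is precisely what S$\bot$A1 needs as its trigger.
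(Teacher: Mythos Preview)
Your proof is correct and essentially identical to the paper's: both obtain $\bot\To\neg\bot$ from EFQ, apply S$\bot$A1 to get $\bot$, and then explode via EFQ and MP. The only difference is cosmetic (you spell out the final EFQ/MP step as two lines rather than one), and your added remark about avoiding transitivity and double-negation is accurate though not highlighted in the paper.
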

\begin{proof}
The proof runs as follows.
		
1 \ \ \ $\bot\To \neg \bot$ \hfill[EFQ]
		
2 \ \ \ $(\bot\To \neg \bot)\To \bot$ \hfill[S$\bot$A1]

3 \ \ \ $\bot$ \hfill[1, 2, MP]
		
4 \ \ \ $A$ \hfill[3, EFQ, MP]

\noindent This completes the proof.
\end{proof}
	
Therefore, we \emph{cannot} have EFQ. Given the close proximity of connexive and relevant logic, at least in terms of the researchers who are interested in these topics, it is interesting to see that EFQ will be ruled out in view of the above triviality result. 
	
But does that not go against the original motivation of chastising the people who treat $A\To\neg A$ as satisfiable? Not necessarily, as long as we make sure that $\bot$ is something truly absurd. As Graham Priest reminds us (\cite[p.422]{priest1998so}), if you are forced to concede that, for example, you are a fried egg, then that should be a bad enough deterrent. The exact choice of absurdity will depend on taste and context, but a basic requirement must be that $\bot$ is never satisfiable. 

One possible requirement is to have that $\bot \models  A$. Given standard assumptions about logical consequence, it is in fact hard to avoid this if $\bot$ is unsatisfiable. Luckily, this requirement will not lead to triviality, as we will observe below. But there is an immediate consequence of this which we should draw attention to right away:
	
\begin{proposition}
$\bot \models A$, S$\bot$A1, MP, SUB and the deduction theorem is trivial.
\end{proposition}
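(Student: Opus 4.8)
The plan is to reduce this to the earlier triviality result for the combination $\bot\To A$, S$\bot$A1, MP and SUB. The crucial observation is that the deduction theorem converts the consequence claim $\bot\models A$ into the \emph{validity} of the conditional $\bot\To A$, i.e., into EFQ, after which we are exactly in the situation already shown to be trivial.

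Concretely, I would proceed in two steps. First, apply the deduction theorem to $\bot\models A$, discharging the single premise $\bot$ into the antecedent of a conditional with empty remaining context, to obtain $\models\bot\To A$; that is, EFQ is a theorem of the system. Second, with EFQ, S$\bot$A1, MP and SUB now all available, invoke the earlier proposition to the effect that precisely this combination is trivial: from $\bot\To\neg\bot$ (an instance of EFQ, obtained via SUB) and $(\bot\To\neg\bot)\To\bot$ (an instance of S$\bot$A1) we derive $\bot$ by MP, and then $A$ for arbitrary $A$ by a further application of EFQ and MP.

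There is essentially no technical obstacle here; the only point requiring care is the exact formulation of the deduction theorem that one assumes, namely that it licenses moving from $\bot\models A$ to $\models\bot\To A$ (single premise, empty residual context). Granting that, the result is immediate. Conceptually, the upshot is the one worth stating: in the presence of S$\bot$A1, MP and SUB, one cannot retain both $\bot\models A$ and the deduction theorem, so taming $\bot$ will require blocking at least one of these.
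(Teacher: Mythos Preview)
Your proposal is correct and mirrors the paper's own proof exactly: apply the deduction theorem to $\bot\models A$ to obtain EFQ as a theorem, then invoke the earlier proposition that $\bot\To A$, S$\bot$A1, MP and SUB together are trivial. The paper's proof is in fact terser than yours, merely citing the earlier result without re-deriving the explicit chain through $\bot\To\neg\bot$.
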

\begin{proof}
If we have $\bot \models A$ and the deduction theorem, then we obtain $\bot\To A$, but this will trivialize with S$\bot$A1, MP, SUB, as observed above.
\end{proof}
	
Thus, the deduction theorem needs to be given up, if we require $\bot \models A$ to express the thought that $\bot$ is absurd.
	
\subsection{Proof of Concept: CC1}
	
To see that the Super-Bot-principles can be satisfied, let us go back to the logic {\bf CC1}, characterized by Angell's four-valued logic (cf. \cite{AngellPropositional,McCall1966}), to which we add a bottom, which can be defined as $\neg (p\To p)$ for some $p$, that always takes value 4:
\begin{center}
{\small
\begin{tabular}{c|cccc|c} 
$\land$ & 1 & 2 & 3 & 4 & $\neg$\\ \hline
1     & 1 & 2 & 3 & 4 & 4\\
2     & 2 & 1 & 4 & 3 & 3\\
3     & 3 & 4 & 3 & 4 & 2\\
4     & 4 & 3 & 4 & 3 & 1
\end{tabular}	
\begin{tabular}{c|cccc|c} 
$\To$ & 1 & 2 & 3 & 4 & $\bot$\\ \hline
1     & 1 & 4 & 3 & 4 & 4\\
2     & 4 & 1 & 4 & 3 & 4\\
3     & 1 & 4 & 1 & 4 & 4\\
4     & 4 & 1 & 4 & 1 & 4
\end{tabular}
}
\end{center}
The designated values are 1 and 2.\footnote{Why do we choose a bottom that always takes value 4, and not value 3? Simply because it works better in giving us desirable validities.}
	
Kapsner has noted in \cite{Kapsner2012strong} that {\bf CC1} is a case of a strongly connexive logic, but that it is not super-connexive. It turns out, however, that the Super-Bot versions of Aristotle and Boethius are valid.

\begin{remark}
Formally, the above point can be made easily by considering classical logic, where the arrow is understood as the bi-conditional connective. But of course, one of the golden rules of connexive logic is that one should avoid turning a conditional into a bi-conditional. In {\bf CC1}, we can observe that $(p\To q)\To (q\To p)$ is \emph{not} valid. Indeed, assign 3 and 1 to $p$ and $q$, respectively.
\end{remark}

\subsection{Implications of Super-Bot-Connexivity}
	
We now turn to observe some consequences of Super-Bot-Connexivity.
	
First, we need to ascertain that weakly connexive logics are ruled out, or else Super-Bot-Connexivity can make no claim to be a restatement of the demands of strong connexivity. Fortunately, if we assure that $\bot$ can not receive a designated value, then we get this result rather easily.
	
\begin{proposition}
Assume super-bot-connexivity, MP and SUB. Then, weakly connexive logics prove $\bot$.
\end{proposition}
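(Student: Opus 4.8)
The plan is to argue semantically, since ``weakly connexive'' is itself a semantic notion: a weakly connexive logic validates \textsc{Aristotle} and \textsc{Boethius} but is \emph{not} strongly connexive, so at least one of \textsc{UnSat1}, \textsc{UnSat2} fails in it. So I would fix such a logic $L$ that additionally satisfies super-bot-connexivity together with MP and SUB, and split into cases according to which \textsc{UnSat} clause is violated. The key leverage is that MP, read (as elsewhere in the paper, where ``MP preserves the value $\mathbf{1}$'') as preserving designated values under each valuation, lets us transmit whatever makes the antecedent of a super-bot axiom designated down to $\bot$.

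Suppose first that \textsc{UnSat1} fails. Then for some $A$ there is a valuation $v$ (in some model of $L$) that satisfies $A\To\neg A$, or else one that satisfies $\neg A\To A$. In the first case, since S$\bot$A1 is valid $v$ also satisfies $(A\To\neg A)\To\bot$, and MP then forces $v$ to satisfy $\bot$; in the second case the same reasoning with S$\bot$A2 in place of S$\bot$A1 gives $v(\bot)$ designated. If instead \textsc{UnSat2} fails, there is a valuation $v$ satisfying both $A\To B$ and $A\To\neg B$ for some $A,B$; using S$\bot$B1 and applying MP twice again makes $v$ satisfy $\bot$. In every case some valuation assigns $\bot$ a designated value, and since $\bot$ behaves as a constant --- its value being independent of the valuation (in {\bf CC1}, for instance, $\neg(p\To p)$ is always $4$) --- $\bot$ is then designated under every valuation, i.e.\ valid, so $L$ proves $\bot$.

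I expect the main obstacle to be conceptual rather than computational: pinning down what ``weakly connexive logics prove $\bot$'' really says, namely that violating an \textsc{UnSat} clause is already enough, via super-bot-connexivity and MP, to make $\bot$ satisfiable, which the constancy of $\bot$ upgrades to validity. This is also precisely where the standing desideratum that $\bot$ never receive a designated value bites: it is flatly incompatible with being merely weakly connexive once super-bot-connexivity is imposed. The remaining care is bookkeeping --- the two sub-cases of \textsc{UnSat1} (appealing to S$\bot$A1 versus S$\bot$A2) and, for \textsc{UnSat2}, the interchangeable roles of S$\bot$B1 and S$\bot$B2.
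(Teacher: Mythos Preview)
Your proposal is correct and follows the same strategy as the paper's (very terse) proof: case-split on which \textsc{UnSat} clause fails, instantiate the relevant super-bot axiom via SUB, and apply MP to reach $\bot$. You are in fact more explicit than the paper on two points it leaves tacit: that MP is being used semantically (preservation of designated values at the offending valuation), and that the constancy of $\bot$ is what upgrades ``$\bot$ designated at some valuation'' to ``$\bot$ valid''; the paper's proof simply says ``apply MP with the suitable super-bot-connexivity'' and relies on the surrounding remark that $\bot$ must never receive a designated value to draw the moral.
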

	
\begin{proof}
Given a weakly connexive logic, there is an instance of UnSat1 or UnSat 2 being violated. Take that instance by SUB, and apply the MP with the suitable super-bot-connexivity. Then, the desired result follows.
\end{proof}
	
But more than weak connexivity is ruled out. Some of the next items we prove put super-bot-connexivity in close proximity to relevant concerns, others to those of constructivists.
	
Two examples for principles that are suspect to the relevant logicians and that are prevented by super-bot-connexivity are Weakening and (a form of) Explosion:
	
\begin{proposition}\label{Prop Weakening}
Weakening, S$\bot$A2, MP and SUB proves $\bot$.
\end{proposition}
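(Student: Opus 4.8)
The plan is to reuse the Curry-style self-reference that drives the earlier triviality proofs: manufacture a theorem $Y$ such that Weakening forces $\neg Y\To Y$, which is exactly the pattern S$\bot$A2 forbids, and then apply S$\bot$A2 one more time to extract $\bot$.

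Concretely, write $Y$ for the formula $(\neg p\To p)\To\bot$; by S$\bot$A2 this is already a theorem. Now take Weakening in the form $A\To(B\To A)$ (if instead one has only the rule ``from $\vdash A$ infer $\vdash B\To A$'', or left weakening in a sequent presentation, the argument is the same but one line shorter), and instantiate it via SUB with $A:=Y$ and $B:=\neg Y$, obtaining $Y\To(\neg Y\To Y)$. Since $Y$ is a theorem, MP yields $\neg Y\To Y$. Finally, SUB applied to S$\bot$A2 with $A:=Y$ gives $(\neg Y\To Y)\To\bot$, and one more MP delivers $\bot$. Laid out in the style of the proofs above (with $Y:=(\neg p\To p)\To\bot$):

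1 \ \ \ $Y$ \hfill[S$\bot$A2]

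2 \ \ \ $Y\To(\neg Y\To Y)$ \hfill[Weakening, SUB]

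3 \ \ \ $\neg Y\To Y$ \hfill[1, 2, MP]

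4 \ \ \ $(\neg Y\To Y)\To\bot$ \hfill[S$\bot$A2, SUB]

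5 \ \ \ $\bot$ \hfill[3, 4, MP]

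The only ingenuity required is noticing that an instance of S$\bot$A2 can itself serve as the ``seed'' theorem $Y$, so that Weakening generates the self-referential instance $\neg Y\To Y$ of the connexively prohibited scheme; once that is seen, the derivation is routine, and I expect the exact formulation of Weakening (axiom schema $A\To(B\To A)$ versus the corresponding rule) to be the only real point to pin down — both routes collapse the system to $\bot$. Note that no property of $\bot$ beyond its occurrence in S$\bot$A2 is used — in particular neither EFQ nor $\bot\models A$ — so the derivation establishes exactly what the proposition claims, namely that $\bot$ becomes provable; upgrading this to full triviality would additionally require such a principle for $\bot$, but that is not what is being asserted here.
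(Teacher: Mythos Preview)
Your proof is correct and is essentially identical to the paper's own argument: with $Y:=(\neg A\To A)\To\bot$, the paper likewise derives $Y$ from S$\bot$A2, applies Weakening to obtain $\neg Y\To Y$, and then uses S$\bot$A2 once more plus MP to conclude $\bot$. The only cosmetic differences are your use of the abbreviation $Y$ and the order of the first two lines.
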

\begin{proof}
The proof runs as follows.
		
1 \ \ \ $((\neg A\To A)\to \bot)\To (\neg ((\neg A\To A)\to \bot)\To ((\neg A\To A)\to \bot))$ \hfill[Weakening]
		
2 \ \ \ $(\neg A\To A)\to \bot$ \hfill[S$\bot$A2]
		
3 \ \ \ $\neg ((\neg A\To A)\to \bot)\To ((\neg A\To A)\to \bot)$ \hfill[1, 2, MP]
		
4 \ \ \ $(\neg ((\neg A\To A)\to \bot)\To ((\neg A\To A)\to \bot))\To \bot$ \hfill[S$\bot$A2]
		
5 \ \ \ $\bot$ \hfill[3, 4, MP]
		
\noindent This completes the proof.
\end{proof}
	
\begin{proposition}\label{prop:SbotB12ECQ}
$A\To (\neg A\To B)$, S$\bot$B1 or S$\bot$B2, MP and SUB proves $\bot$.
\end{proposition}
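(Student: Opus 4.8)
The plan is to exploit a feature of $A\To(\neg A\To B)$ that is easy to miss: it lets one turn \emph{any} theorem into a fully explosive formula. Pick any theorem $\theta$ — for definiteness, the instance $\theta:=p\To(\neg p\To q)$ of the explosion schema, which is a theorem by assumption. Now instantiate $A\To(\neg A\To B)$ by putting $A:=\theta$ and leaving $B$ arbitrary, obtaining $\theta\To(\neg\theta\To B)$; since $\vdash\theta$, one application of MP yields $\vdash\neg\theta\To B$ for every $B$. So $\neg\theta$ implies everything.

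From here the argument is the familiar Super-Bot-Boethius move. Taking $B:=p$ and then $B:=\neg p$ in the schema just derived gives the two theorems $\neg\theta\To p$ and $\neg\theta\To\neg p$. Now instantiate S$\bot$B1 with $A:=\neg\theta$ and $B:=p$, i.e.\ $(\neg\theta\To p)\To((\neg\theta\To\neg p)\To\bot)$, and apply MP twice: this yields $\bot$. The same two theorems also discharge S$\bot$B2 — through its instance $(\neg\theta\To\neg p)\To((\neg\theta\To p)\To\bot)$ — so a single argument handles both disjuncts of the statement, and only MP and SUB are used beyond explosion and one super-bot-Boethius principle.

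The one point that requires a moment's thought — and the main, mild obstacle — is that no instance of $A\To(\neg A\To B)$ is itself a conditional whose consequent is a negation, so there is no direct reading-off of a theorem matching the second antecedent $A\To\neg B$ of S$\bot$B1 (or S$\bot$B2). The key trick, the same one used in the proof of Proposition~\ref{Prop Weakening}, is that such a theorem only needs to be \emph{derived}: one bootstraps off the bare fact that the logic has theorems at all, using explosion together with MP to manufacture $\neg\theta\To\neg p$. Everything else is routine substitution and modus ponens, so I anticipate no further difficulty.
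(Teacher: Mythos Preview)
Your proof is correct and is essentially the same argument as the paper's: both pick an instance $\theta$ of the explosion schema, feed $\theta$ back into explosion as the $A$-argument to obtain $\neg\theta\To B$ and $\neg\theta\To\neg B$, and then fire S$\bot$B1 (resp.\ S$\bot$B2) with antecedent $\neg\theta$. The only cosmetic difference is the choice of seed theorem --- the paper takes $\theta := A\To(\neg A\To A)$ while you take $\theta := p\To(\neg p\To q)$ --- and the choice of target pair ($B,\neg B$ versus $p,\neg p$).
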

\begin{proof}
The proof runs as follows.
		
1 \ \ \ $A\To (\neg A\To A)$ \hfill[ECQ]
		
2 \ \ \ $(A\To (\neg A\To A))\To (\neg (A\To (\neg A\To A))\To B)$ \hfill[ECQ]
		
3 \ \ \ $\neg (A\To (\neg A\To A))\To B$ \hfill[1, 2, MP]
		
4 \ \ \ $(A\To (\neg A\To A))\To (\neg (A\To (\neg A\To A))\To \neg B)$ \hfill[ECQ]
		
5 \ \ \ $\neg (A\To (\neg A\To A))\To \neg B$ \hfill[1, 4, MP]
		
6 \ \ \ $(\neg (A\To (\neg A\To A))\To B)\To ((\neg (A\To (\neg A\To A))\To \neg B)\To \bot)$ \hfill[S$\bot$B1]
		
7 \ \ \ $\bot$ \hfill[3, 5, 6, MP]
		
\noindent The case with S$\bot$B2 is similar.
\end{proof}
	
\begin{remark}\label{rmk:SbotB12ECQ}
We can also replace $A\To (\neg A\To B)$ by $(A\land \neg A)\To B$ and $((A\land B)\To C)\To ((A\To (B\To C)))$ to obtain the derivablity of $\bot$. We will discuss other forms of expressing Explosion in the next section.
\end{remark}
	
Let us add a few more implications of super-bot-connexivity.
	
\begin{proposition}
$A\To \top$, S$\bot$A2, MP and SUB proves $\bot$.
\end{proposition}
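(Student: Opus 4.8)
The plan is to manufacture an instance of $\neg C\To C$ as cheaply as possible from $A\To\top$, and then detach $\bot$ using S$\bot$A2. The crucial observation is that $\top$ itself can play the role of $C$: substituting $\neg\top$ for $A$ in the schema $A\To\top$ gives $\neg\top\To\top$, which is literally an instance of $\neg C\To C$ with $C:=\top$. This is exactly the shape of the antecedent of S$\bot$A2.

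Concretely, I would write down a three-line derivation. First, $\neg\top\To\top$, obtained from $A\To\top$ by SUB (taking $A:=\neg\top$). Second, $(\neg\top\To\top)\To\bot$, obtained from S$\bot$A2 by SUB (taking $A:=\top$). Third, $\bot$, by MP from the first two lines. This completes the derivation of $\bot$.

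There is essentially no obstacle here: the only thing one needs to notice is that the antecedent $\neg A\To A$ of S$\bot$A2 is matched verbatim by the instance $\neg\top\To\top$ of $A\To\top$, so no massaging of formulas is required. It is perhaps worth remarking, as with the earlier results, that this triviality does not rely on any negation principles such as DNI or DNE, nor on transitivity — it goes through with just $A\To\top$, S$\bot$A2, MP and SUB, which is why the statement is phrased with that minimal stock of assumptions.
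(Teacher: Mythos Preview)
Your proof is correct and matches the paper's own proof line for line: substitute $\neg\top$ into $A\To\top$ to get $\neg\top\To\top$, substitute $\top$ into S$\bot$A2 to get $(\neg\top\To\top)\To\bot$, and apply MP.
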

\begin{proof}
The proof runs as follows.

1 \ \ \ $\neg \top\To \top$ \hfill[$A\To \top$]
		
2 \ \ \ $(\neg \top\To \top)\To \bot$ \hfill[S$\bot$A2]

3 \ \ \ $\bot$ \hfill[1, 2, MP]

\noindent This completes the proof.
\end{proof}
	
\begin{remark}
Exactly the same proof also establishes that $\neg \top\To A$, S$\bot$A2, MP and SUB proves $\bot$.
\end{remark}
	
\begin{remark}
The above result can be seen as showing a tension between Super-Bot-Aristotle and \textit{ubiquitous truths}, a notion introduced by Andr\'e Fuhrmann in \cite{Fuhrmann1990models} in the context of developing modal logics that expand relevant logic.\footnote{See also \cite{Standefer2020actual} for a recent discussion on ubiquitous truths.} Note, also the difference between $A\To \top$ and $\top$. The former is in tension with Super-Bot-Aristotle, but the latter is not. This might be seen as another conceptual link to relevant logic.
\end{remark}
	
On the other hand, here is something that constructive logicians have been known to reject and that is similarly ruled out by super-bot-connexivity, 
	
\begin{proposition}
Peirce's law, S$\bot$A1, MP and SUB proves $\bot$.
\end{proposition}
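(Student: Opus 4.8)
The plan is to reuse the strategy that drives the preceding propositions: exhibit a conditional that is provable (here, an instance of S$\bot$A1) and whose consequent is $\bot$, together with a provable copy of its antecedent, and then detach $\bot$ by MP. Peirce's law, $((A\To B)\To A)\To A$, is tailor-made for this purpose. If we let $A$ be $\bot$, then the conclusion of Peirce's law becomes $\bot$ itself, while its antecedent takes the shape $(\,\cdot\To\bot\,)\To\bot$ --- and conditionals of exactly that shape are the instances of S$\bot$A1 obtained by substituting $\bot$ for $A$.

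Concretely, I would instantiate Peirce's law with $A:=\bot$ and $B:=\neg\bot$, so that the antecedent $(A\To B)\To A$ reads $(\bot\To\neg\bot)\To\bot$; this gives the theorem $((\bot\To\neg\bot)\To\bot)\To\bot$. Its antecedent $(\bot\To\neg\bot)\To\bot$ is nothing but S$\bot$A1 with $A:=\bot$, hence also a theorem. One application of MP then yields $\bot$. In outline the derivation is: (1)~$((\bot\To\neg\bot)\To\bot)\To\bot$ by Peirce's law; (2)~$(\bot\To\neg\bot)\To\bot$ by S$\bot$A1; (3)~$\bot$ from (1), (2) by MP.

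I do not anticipate any genuine obstacle. The only point worth a moment's care is that both instances are obtained by substituting the object-language constant $\bot$ into the respective schemata, which is entirely legitimate; in fact SUB does no real work here beyond licensing the formation of these two particular instances, and the argument is essentially immediate once one notices that Peirce's law turns the S$\bot$A1-theorem $(\bot\To\neg\bot)\To\bot$ directly into a route to $\bot$.
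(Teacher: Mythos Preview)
Your proposal is correct and essentially identical to the paper's own proof: the paper also instantiates S$\bot$A1 with $A:=\bot$ to get $(\bot\To\neg\bot)\To\bot$, instantiates Peirce's law with $A:=\bot$, $B:=\neg\bot$ to get $((\bot\To\neg\bot)\To\bot)\To\bot$, and detaches $\bot$ by MP. The only difference is the order in which the two premises are listed.
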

\begin{proof}
The proof runs as follows.
		
1 \ \ \ $(\bot\To \neg \bot)\To \bot$ \hfill[S$\bot$A1]
		
2 \ \ \ $((\bot\To \neg \bot)\To \bot)\To \bot$ \hfill[Peirce's law]

3 \ \ \ $\bot$ \hfill[1, 2, MP]

\noindent This completes the proof.
\end{proof}
	
Lastly, here are two things that have few enemies, but are incompatible with super-bot-connexivity. They are the deduction theorem and conjunction elimination, principles that, their inoffensiveness not\-with\-stand\-ing, often fail in connexive settings.
	
\begin{proposition}
The deduction theorem, S$\bot$A2, MP and SUB proves $\bot$. 
\end{proposition}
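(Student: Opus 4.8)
The plan is to reduce this to Proposition~\ref{Prop Weakening}, which already establishes that Weakening, S$\bot$A2, MP and SUB prove $\bot$. So it suffices to observe that the deduction theorem alone — together with the reflexivity of $\vdash$ that is built into any formulation of it — already yields Weakening, i.e.\ the validity of $A\To(B\To A)$. This is the textbook derivation: from $A, B\vdash A$ (reflexivity) the deduction theorem gives $A\vdash B\To A$, and a second application gives $\vdash A\To(B\To A)$. Feeding this into the proof of Proposition~\ref{Prop Weakening} then produces $\bot$, and we are done.

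Alternatively, I would note that one can give a self-contained derivation in the same style as the earlier proofs, without routing through Weakening explicitly. Pick any theorem $C$; for concreteness take $C:=(\neg A\To A)\To\bot$, which is an instance of S$\bot$A2. From $\vdash C$ and monotonicity we get $\neg C\vdash C$, so the deduction theorem gives $\vdash\neg C\To C$. But S$\bot$A2 applied with $C$ in place of $A$ gives $\vdash(\neg C\To C)\To\bot$, and MP yields $\bot$. Either route works; I would make the reduction to Proposition~\ref{Prop Weakening} the official proof and perhaps mention the direct derivation in passing.

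I do not expect a genuine obstacle here. The only point that needs care is that ``the deduction theorem'' is taken in its standard form, so that the structural facts about $\vdash$ (reflexivity, monotonicity) are available alongside it; without them neither the passage to Weakening nor the direct derivation goes through, and one would have to say something more delicate about exactly which consequence relation is in play. Given the conventions already fixed in the paper (MP and SUB in force, $\bot$ a constant that never receives a designated value), this is unproblematic.
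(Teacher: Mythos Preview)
Your main approach is exactly the paper's proof: it too simply observes that Weakening is derivable from the deduction theorem and then invokes Proposition~\ref{Prop Weakening}. Your alternative direct derivation via $\neg C\To C$ for a theorem $C$ is also correct and is a nice bonus, though the paper does not include it.
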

\begin{proof}
An immediate corollary to Prop.~\ref{Prop Weakening}, since weakening is derivable by the deduction theorem.
\end{proof}
	
\begin{proposition}
Conjunction elimination, S$\bot$B1 or S$\bot$B2, MP and SUB proves $\bot$.
\end{proposition}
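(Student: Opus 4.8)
The plan is to exploit the observation that conjunction elimination hands us, for free, a formula of the form $(X\To A)$ \emph{and} a formula of the form $(X\To\neg A)$ with the \emph{same} antecedent $X$ --- namely by taking $X$ to be $A\land\neg A$. Since S$\bot$B1 (and likewise S$\bot$B2) has exactly the shape ``from $(A\To B)$ and $(A\To\neg B)$ infer $\bot$'', once both conjuncts of the relevant instance of conjunction elimination are in hand, two applications of MP should deliver $\bot$ directly, with SUB only needed to instantiate the schemata.

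Concretely, for the S$\bot$B1 case I would proceed as follows: first write down $(A\land\neg A)\To A$ and $(A\land\neg A)\To\neg A$, both instances of conjunction elimination; then take the instance of S$\bot$B1 obtained by substituting $A\land\neg A$ for its ``$A$'' and $A$ for its ``$B$'', i.e. $((A\land\neg A)\To A)\To(((A\land\neg A)\To\neg A)\To\bot)$; and finish with MP applied twice, first against $(A\land\neg A)\To A$ and then against $(A\land\neg A)\To\neg A$, concluding $\bot$. For the S$\bot$B2 case the argument is symmetric: the same two instances of conjunction elimination are used, but now one feeds $(A\land\neg A)\To\neg A$ into the outer conditional of S$\bot$B2 first and $(A\land\neg A)\To A$ into the inner one second.

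I do not expect any real obstacle here: the whole point is that conjunction elimination is precisely the principle that makes the two ``competing'' conditionals $(A\land\neg A)\To A$ and $(A\land\neg A)\To\neg A$ simultaneously available, which is exactly the configuration that Super-Bot-Boethius forbids. If anything, the only thing to be a little careful about is making the substitution instances explicit (so that it is clear SUB is doing the work of forming the right instances of S$\bot$B1 / S$\bot$B2), and noting that the S$\bot$B2 derivation simply swaps the order in which the two MP steps are performed. This also fits the pattern of the preceding propositions in the section, so I would present it in the same numbered-line style.
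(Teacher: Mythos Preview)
Your proposal is correct and is exactly the argument the paper gives: the paper's proof simply says ``By considering instances of conjunction elimination, namely $(A\land\neg A)\To A$ and $(A\land\neg A)\To\neg A$'', and your numbered-line derivation is just this spelled out in full. Nothing needs to change.
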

\begin{proof}
By considering instances of conjunction elimination, namely $(A{\land} \neg A)\To A$ and $(A{\land} \neg A)\To \neg A$.
\end{proof}
	
\begin{remark}
The prime examples of systems of connexive logic with both the deduction theorem and conjunction elimination are the weakly connexive systems {\bf C}, devised in \cite{Wansing2005} by Heinrich Wansing and {\bf CN}, due to John Cantwell (\cite{Cantwell2008}). See \cite[\S4.3]{OmoriWansingAiML20} for a discussion on UnSat 1 and 2 in these systems.
\end{remark}
	
\section{Super-Bot-Connexivity and Explosion}\label{Explosion}

Let us make some remarks on the relationship between super-bot-connexivity and Explosion, a relationship that seems pertinent to focus on given the origin of the idea of super-bot-connexivity. After all, the original idea of super-connexivity was explicitly modeled after the idea of Explosion.
	
We have different ways of capturing Explosion to consider. Kapsner started out from the arrow form involving conjunction:
	
\begin{description}
\item [{\textsc{ECQ$_{\land}$}:}]  $(A\land \neg A)\To B$.
\end{description}
We can also consider the following form which does not involve conjunctions:
	
\begin{description}
\item [{\textsc{ECQ$_{\to}$}:}]  $A\To (\neg A\To B)$.
\end{description}
	
We have seen above in Proposition~\ref{prop:SbotB12ECQ} that $A\To (\neg A\To B)$ is incompatible with Super-Bot-Boethius. Given Exportation, the same goes for $(A\land \neg A)\To B$ (recall Remark~\ref{rmk:SbotB12ECQ}), but there is also another proof that makes use of S$\bot$A1.

\begin{proposition}
$(A\land \neg A)\To B$, S$\bot$A1, MP and SUB proves $\bot$.
\end{proposition}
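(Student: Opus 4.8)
The plan is to mimic the structure of the earlier EFQ triviality argument (the one showing $\bot\To A$, S$\bot$A1, MP and SUB is trivial): there, the trick was to manufacture a self-referential conditional of the shape $C\To\neg C$ and then hit it with S$\bot$A1 to extract $\bot$. Here the role of that self-application will be played not by $\bot$ but by a contradiction $A\land\neg A$, exploiting the fact that $\textsc{ECQ}_\land$ lets us conclude \emph{anything} from $A\land\neg A$ — including its own negation.

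Concretely, I would proceed in three lines. First, take the instance of $(A\land\neg A)\To B$ in which $B$ is substituted by $\neg(A\land\neg A)$, giving $(A\land\neg A)\To\neg(A\land\neg A)$. Second, note that this is of the form $C\To\neg C$ with $C:=A\land\neg A$, so the relevant instance of S$\bot$A1 (substituting $A\land\neg A$ for $A$) reads $((A\land\neg A)\To\neg(A\land\neg A))\To\bot$. Third, apply MP to these two to obtain $\bot$. Schematically:

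\begin{description}
\item[1.] $(A\land\neg A)\To\neg(A\land\neg A)$ \hfill[$\textsc{ECQ}_\land$, SUB]
\item[2.] $((A\land\neg A)\To\neg(A\land\neg A))\To\bot$ \hfill[S$\bot$A1, SUB]
\item[3.] $\bot$ \hfill[1, 2, MP]
\end{description}

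There is essentially no obstacle here; the only thing to be careful about is the bookkeeping of substitutions — making sure that line 1 is genuinely an instance of $\textsc{ECQ}_\land$ (it is, taking the metavariable $B$ to be $\neg(A\land\neg A)$) and that line 2 is genuinely an instance of S$\bot$A1 (it is, taking the metavariable $A$ of S$\bot$A1 to be the formula $A\land\neg A$). This is exactly the second, conjunction-based route to incompatibility of Explosion with super-bot-connexivity advertised in the surrounding text: unlike the argument via Exportation and Remark~\ref{rmk:SbotB12ECQ}, it invokes S$\bot$A1 directly rather than Super-Bot-Boethius, so no exportation principle is needed.
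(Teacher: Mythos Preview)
Your proof is correct and is essentially identical to the paper's own proof: the paper likewise instantiates $\textsc{ECQ}_\land$ with $B:=\neg(A\land\neg A)$, applies S$\bot$A1 to the resulting formula, and concludes $\bot$ by MP. The only cosmetic difference is that you explicitly flag SUB in the justifications.
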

\begin{proof}
The proof runs as follows.

1 \ \ \ $(A\land \neg A)\To \neg (A\land \neg A)$ \hfill[ECQ$_{\land}$]
		
2 \ \ \ $((A\land \neg A)\To \neg (A\land \neg A))\To \bot$ \hfill[S$\bot$A1]

3 \ \ \ $\bot$ \hfill[1, 2, MP]

\noindent This completes the proof.
\end{proof}
	
Therefore, there are some tensions between super-bot-connexivity and the above forms of Explosion. However, given the availability of $\bot$ in our proposal, we can also consider these versions of Explosion, which we name ECF (\textit{ex contradictione falsum}):
	
\begin{description}
\item [{\textsc{ECF$_{\land}$}:}]  $(A\land \neg A)\To \bot$.
\item [{\textsc{ECF$_{\to}$}:}]  $A\To (\neg A\To \bot)$.
\end{description}
	
Now, in {\bf CC1}, we do have $(A\land \neg A)\To \bot $, so \textsc{ECF$_{\land}$} is compatible with the super-bot principles. On the other hand, in {\bf CC1}, we don't have $A\To (\neg A\To \bot)$, but we do see it satisfied in the logic that has the following truth tables and $1$ as the designated value.\footnote{This is a logic closely related to {\bf CC1}, and studied in more depth in \cite{AM3}, but here it only serves as a way to show the consistency of these ideas.}
\begin{center}
{\small
\begin{tabular}{ccc}
\begin{tabular}{c|c}
& $\neg $    \\ \hline
$\mathbf{1}$ & $\mathbf{0}$  \\
$\mathbf{i}$ & $\mathbf{0}$\\
$\mathbf{0}$ & $\mathbf{1}$  \\  
\end{tabular}
&
\begin{tabular}{c|cccccc}
$\land$ & $\mathbf{1}$ & $\mathbf{i}$ & $\mathbf{0}$ \\   \hline
$\mathbf{1}$ & $\mathbf{1}$ & $\mathbf{i}$ & $\mathbf{0}$ \\ 
$\mathbf{i}$ & $\mathbf{i}$ & $\mathbf{i}$ & $\mathbf{0}$ \\  
$\mathbf{0}$ & $\mathbf{0}$ & $\mathbf{0}$ & $\mathbf{i}$ \\  
\end{tabular}
&
\begin{tabular}{c|cccccc}
$\To$ & $\mathbf{1}$ & $\mathbf{i}$ & $\mathbf{0}$ \\   \hline
$\mathbf{1}$ & $\mathbf{1}$ & $\mathbf{i}$ & $\mathbf{0}$ \\ 
$\mathbf{i}$ & $\mathbf{1}$ & $\mathbf{1}$ & $\mathbf{0}$ \\  
$\mathbf{0}$ & $\mathbf{0}$ & $\mathbf{0}$ & $\mathbf{1}$ \\  
\end{tabular}
\end{tabular}
}
\end{center}
Therefore, super-bot-connexivity and the bot-versions of explosion are compatible.
	
Philosophically speaking, how to make sense of all of this requires more space than we have here, but here are some first thoughts: There is a close connection between connexive and relevant ideas, at the very least sociologically speaking. To see that super-bot-connexivity and very few other assumptions rule out $A\To (\neg A\To B)$ is, in that sense, maybe a welcome outcome for many in the respective communities (especially as Weakening is also ruled out). Whether the Bot-versions of Explosion are acceptable to relevantists will, probably, be dependant on a more fully fleshed out philosophical explanation of what $\bot$ means, but we will leave that question for future study.
	
On the other hand, that $(A\land \neg A)\To \bot $ and $A\To (\neg A\To \bot)$ are compatible with the proposal seems to be in harmony with the original story of the super-connexive idea, which was to model the object language representation of the UnSat-principle after the use Explosion is put to by those who want to rule out the sastisfiability of contradictions. It would, maybe, seem strange if the two ideas could not be combined in some way, and we find that the $\bot$-versions of these principles are the way to achieve this.\footnote{The philosophical position that holds that contradictions are satisfiable but asks for strong connexivity is, of course, possible, but it seems doubtful whether it will find many adherents.}
	
\section{An interlude: Super-Bot versions of Abelard and Aristotle 2nd} \label{Abelard}
	
There are two more principles that often (though not in \cite{Kapsner2012strong}) come up in discussions of connexive logic, namely Abelard's thesis  and Aristotle's second thesis:
	
\begin{description}
\item [{\textsc{Abelard}:}]  {$\neg ((A\To B)\land (A\To \neg B))$ is valid;}
\end{description}

\begin{description}
\item [{\textsc{Aristotle's Second Thesis}:}]  {$\neg ((A\To B)\land (\neg A\To B))$ is valid.}
\end{description}
Some of the recent discussions that are related to these principles include \cite{Rott2019difference,Crupi2020evidential}. We do not wish to enter the debate whether these are plausible or whether they should or should not be seen as part of the connexive canon. However, for the sake of completeness of our discussion, we would like to investigate whether there are Super-Bot treatments of these principles, and how they might interact with the others.

As we just mentioned, these are not discussed in \cite{Kapsner2012strong}. However, it seems natural to say that UnSat2 captures the intuition that makes Abelard plausible, while Aristotle's Second Thesis would call for a new unsatisfiability requirement. We would like to suggest the following:

\begin{description} \label{Unsat3}
\item [{\textsc{UnSat3}:}] {In no model $(A\To B)$ and $(\neg A\To B)$ are satisfiable simultaneously (for any $A$ and $B$).}
\end{description}

Here is what one could expect a superconnexive treatment of these theses to look like:
	 
\begin{description}
\item [{\textsc{Super-Abelard}:}]  {$((A\To B)\land (A\To \neg B))\To C$ is  valid;}
\item [{\textsc{Super-Aristotle2}:}]  {$((A\To B)\land (\neg A\To B))\To C$ is  valid.}
\end{description}

Although it is not clear to us at the moment if the above principles trivialize only with MP and SUB, we do obtain triviality if we have the rule of Adjunction for the case with Super-Abelard, as follows.

\begin{proposition}
The rule of Adjunction (Adj.), Super-Abelard, MP and SUB is trivial.
\end{proposition}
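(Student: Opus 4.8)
The plan follows the same template as the earlier triviality results: use the principle in question to manufacture a pair of formulas that the principle itself then explodes. The key observation is that \textsc{Super-Abelard}, read as a schema, instantly produces the very kind of conflicting pair of conditionals that its antecedent is meant to rule out. Abbreviate $P := (A\To B)\land(A\To\neg B)$ for fixed $A,B$. Instantiating \textsc{Super-Abelard} with the consequent $C$ taken first as an arbitrary $D$ and then as $\neg D$ shows that $P\To D$ and $P\To\neg D$ are both theorems.

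The next move is the only place the rule of Adjunction is needed: applying it to these two theorems yields $(P\To D)\land(P\To\neg D)$. Now I would instantiate \textsc{Super-Abelard} a second time, this time reading $P$ itself as the ``$A$'', $D$ as the ``$B$'', and an arbitrary $E$ as the ``$C$''; this gives $((P\To D)\land(P\To\neg D))\To E$. A single application of MP then delivers $E$, and since $E$ was arbitrary the system is trivial. Written out, the derivation would read: (1) $P\To D$ [\textsc{Super-Abelard}]; (2) $P\To\neg D$ [\textsc{Super-Abelard}]; (3) $(P\To D)\land(P\To\neg D)$ [1, 2, Adj.]; (4) $((P\To D)\land(P\To\neg D))\To E$ [\textsc{Super-Abelard}]; (5) $E$ [3, 4, MP].

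The one point that needs care --- and the main, if modest, obstacle --- is verifying that step (4) is a genuine substitution instance of \textsc{Super-Abelard}: one must check that substituting $A\mapsto P$, $B\mapsto D$, $C\mapsto E$ into $((A\To B)\land(A\To\neg B))\To C$ produces exactly the formula obtained by Adjunction in step (3), i.e.\ that the second conjunct of the antecedent becomes $P\To\neg D$ rather than, say, $\neg(P\To D)$. It does, so SUB licenses line (4), and the argument runs. Note that, unlike some of the earlier propositions, no double-negation or transitivity laws are required here; Adjunction alone, used to build the conjunction that \textsc{Super-Abelard}'s antecedent demands, is what makes the proof go through.
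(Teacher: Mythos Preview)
Your proof is correct and is essentially identical to the paper's own argument: with $P$ abbreviating $(A\To B)\land(A\To\neg B)$, both derive $P\To D$ and $P\To\neg D$ as instances of \textsc{Super-Abelard}, adjoin them, and then apply a further instance of \textsc{Super-Abelard} (with $P$ in the antecedent role) plus MP to obtain an arbitrary formula. Your explicit check that line~(4) is a genuine substitution instance is a nice addition that the paper leaves implicit.
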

\begin{proof}
The proof runs as follows.

1 \ \ \ $((A\To B)\land (A\To \neg B))\To C$ \hfill[Super-Abelard]
		
2 \ \ \ $((A\To B)\land (A\To \neg B))\To \neg C$ \hfill[Super-Abelard]
		
3 \ \ \ $(((A\To B)\land (A\To \neg B))\To C)\land (((A\To B)\land (A\To \neg B))\To \neg C)$ \hfill[1, 2, Adj.]
		
4 \ \ \ $((((A\To B)\land (A\To \neg B))\To C)\land (((A\To B)\land (A\To \neg B))\To \neg C))\To D$ \hfill[Super-Abelard]

5 \ \ \ $D$ \hfill[3, 4, MP]
		
\noindent This completes the proof.
\end{proof}

So, maybe we are not facing quite the same need as we did in the cases of Super-Aristotle and Super-Boethius to consider variations of Super-Abelard and Super-Aristotle2. Nonetheless here is how a super-bot-connexive approach to these principles might look: 

\begin{description}
\item [{\textsc{Super-Bot-Abelard}:}]  {$((A\To B)\land (A\To \neg B))\To \bot$ is  valid;}
\item [{\textsc{Super-Bot-Aristotle2}:}]  {$((A\To B)\land (\neg A\To B))\To \bot$ is valid.\footnote{A referee interestingly wonders if ``conditionalized'' versions of \textsc{Super-Aristotle2} and \textsc{Super-Bot-Aristotle2}, namely the following principles, make sense. 
\begin{itemize}
\item $(A\To B)\To ((\neg A\To B)\To C)$ is valid.
\item $(A\To B)\To ((\neg A\To B)\To \bot)$ is valid.
\end{itemize}
This seems to be very interesting, possibly having some connections to the system introduced in \cite{Omori2016Francez}, motivated by \cite{Francez2016natural}. Further investigations, however, will be left for another occasion.}}
\end{description}
	
Let us now briefly explore the simple implications of the bot versions of Abelard's thesis and Aristotle 2nd thesis combined with some familiar axioms/rules involving conjunction and/or disjunction.
	
\subsection{Super-Bot-Abelard}
We start with a triviality result.
	
\begin{proposition}
$\bot\To A$, the rule of Adjunction, Super-Bot-Abelard, MP and SUB is trivial.
\end{proposition}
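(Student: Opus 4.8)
The plan is to mimic the triviality proof for EFQ plus S$\bot$A1 (the earlier proposition ``$\bot\To A$, S$\bot$A1, MP and SUB is trivial''), but route the derivation of $\bot$ through \textsc{Super-Bot-Abelard} instead of S$\bot$A1. The key observation is that once we have EFQ (here written $\bot\To A$) together with the rule of Adjunction, we can manufacture an instance of the antecedent of \textsc{Super-Bot-Abelard}, namely a conjunction of the form $(A'\To B')\land(A'\To\neg B')$, simply by picking $A'$ to be something EFQ hands us conditionals about.

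Concretely, I would first instantiate EFQ to get $\bot\To B$ and $\bot\To\neg B$ for an arbitrary $B$ (two instances of $\bot\To A$, substituting $B$ and $\neg B$ for $A$). Then, by the rule of Adjunction, I obtain $(\bot\To B)\land(\bot\To\neg B)$. This is exactly an instance of the antecedent of \textsc{Super-Bot-Abelard} with $\bot$ playing the role of $A$ and $B$ the role of $B$, so by the validity of $((A\To B)\land(A\To\neg B))\To\bot$ and MP we derive $\bot$. One more application of EFQ and MP then yields an arbitrary $C$, establishing triviality. So the proof layout would be: (1) $\bot\To B$ [EFQ]; (2) $\bot\To\neg B$ [EFQ]; (3) $(\bot\To B)\land(\bot\To\neg B)$ [1, 2, Adj.]; (4) $((\bot\To B)\land(\bot\To\neg B))\To\bot$ [Super-Bot-Abelard]; (5) $\bot$ [3, 4, MP]; (6) $A$ [5, EFQ, MP].

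I do not anticipate a genuine obstacle here; the argument is routine once one sees that Adjunction is precisely what lets us assemble the conjunctive antecedent that \textsc{Super-Bot-Abelard} requires — this mirrors the earlier remark that \textsc{Super-Abelard} needs Adjunction (rather than MP and SUB alone) to trivialize. The only point requiring a modicum of care is making sure the substitution instances line up: we are using SUB implicitly when we read off ``$\bot\To B$'' and ``$\bot\To\neg B$'' as instances of the schema $\bot\To A$, and when we read off step (4) as an instance of \textsc{Super-Bot-Abelard}. If one wanted to be maximally parsimonious, one could even note that a single propositional variable $p$ in place of $B$ suffices, so that only SUB into the $A$-slot of \textsc{Super-Bot-Abelard} is strictly needed. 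The conceptual takeaway worth flagging alongside the proof is the same one drawn for S$\bot$A1: EFQ (i.e.\ $\bot\To A$) is incompatible with the super-bot treatment of the connexive theses, so ``$\bot$ implies everything'' in the \emph{object-language} conditional sense must be rejected, even though $\bot\models A$ at the level of consequence may be retained.
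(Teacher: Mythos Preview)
Your proposal is correct and essentially identical to the paper's own proof: both instantiate EFQ twice to obtain $\bot\To B$ and $\bot\To\neg B$, conjoin them by Adjunction, apply \textsc{Super-Bot-Abelard} and MP to derive $\bot$, and then use EFQ once more to get an arbitrary formula. The only cosmetic difference is that the paper compresses your steps (4) and (5) into a single line.
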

\begin{proof}
The proof runs as follows.

1 \ \ \ $\bot\To A$ \hfill[EFQ]
		
2 \ \ \ $\bot\To \neg A$ \hfill[EFQ]
		
3 \ \ \ $(\bot\To A)\land (\bot\To \neg A)$ \hfill[1, 2, Adj.]

4 \ \ \ $\bot$ \hfill[3, Super-Bot-Abelard, MP]
		
5 \ \ \ $A$ \hfill[4, EFQ, MP]

\noindent This completes the proof.
\end{proof}
	
\begin{remark}
By replacing $\bot$ by $\neg \top$, and removing the last line, we can see that $\neg\top\To A$, the rule of Adjunction, Super-Bot-Abelard, MP and SUB proves $\bot$.
\end{remark}
	
We also have a tension with conjunction elimination.
	
\begin{proposition}
Conjunction elimination, the rule of Adjunction, Super-Bot-Abelard, MP and SUB proves $\bot$.
\end{proposition}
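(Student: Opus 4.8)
The plan is to mimic the earlier proposition that conjunction elimination together with S$\bot$B1 (or S$\bot$B2) proves $\bot$, but now to route the contradiction through the conjunctive antecedent of \textsc{Super-Bot-Abelard}; this is exactly why the rule of Adjunction is needed here, whereas it was not needed in the S$\bot$B case.

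First I would fix an arbitrary formula $p$ and instantiate conjunction elimination at $p\land\neg p$ to get its two conjuncts: this yields $(p\land\neg p)\To p$ and $(p\land\neg p)\To\neg p$. These are precisely the two premisses we want to feed into \textsc{Super-Bot-Abelard}, which in the relevant instance (taking $A:=p\land\neg p$ and $B:=p$, so that $\neg B$ is $\neg p$) reads $\big(((p\land\neg p)\To p)\land((p\land\neg p)\To\neg p)\big)\To\bot$. Then I would apply the rule of Adjunction to the two instances of conjunction elimination to obtain $((p\land\neg p)\To p)\land((p\land\neg p)\To\neg p)$, which by SUB is exactly the antecedent of that instance of \textsc{Super-Bot-Abelard}; a single application of MP then delivers $\bot$, completing the proof.

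I do not expect a genuine obstacle here: the only point to watch is that the chosen instance of \textsc{Super-Bot-Abelard} has $B$ and $\neg B$ in its two conjuncts, which forces the second premiss to be $(p\land\neg p)\To\neg p$ rather than some negated conditional, and conjunction elimination supplies precisely that, so the match is immediate. (If one wished, the argument also goes through with $B:=\neg p$ at the cost of a double-negation step, but that is unnecessary.)
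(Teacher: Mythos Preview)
Your proposal is correct and follows essentially the same approach as the paper: instantiate conjunction elimination at $A\land\neg A$ to obtain $(A\land\neg A)\To A$ and $(A\land\neg A)\To\neg A$, adjoin them, and apply MP with the instance of \textsc{Super-Bot-Abelard} where the outer $A$ is $A\land\neg A$ and $B$ is $A$. The paper's proof is line-for-line the same, only written with a schematic $A$ rather than a fixed $p$.
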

\begin{proof}
The proof runs as follows.

1 \ \ \ $(A{\land} \neg A)\To A$ \hfill[Conj. elim.]
		
2 \ \ \ $(A{\land} \neg A)\To \neg A$ \hfill[Conj. elim.]

3 \ \ \ $((A{\land} \neg A)\To A)\land ((A{\land} \neg A)\To \neg A)$ \hfill[1, 2, Adj.]
		
4 \ \ \ $\bot$ \hfill[3, Super-Bot-Abelard, MP]

\noindent This completes the proof.
\end{proof}
	
\subsection{Super-Bot-Aristotle 2nd}
	
\begin{proposition}
$A\To \top$, the rule of Adjunction, Super-Bot-Aristotle 2nd, MP and SUB proves $\bot$.
\end{proposition}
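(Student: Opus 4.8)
The plan is to follow the same template as the Super-Bot-Abelard propositions above: instantiate the given schema twice so that the two instances match, respectively, the two conjuncts occurring in the antecedent of the relevant super-bot principle, glue them together with Adjunction, and finish with a single application of MP. Here the key observation is that $\top$ serves as a consequent that does not depend on the antecedent, so that both conjuncts $A\To\top$ and $\neg A\To\top$ required by \textsc{Super-Bot-Aristotle 2nd} can be obtained from $A\To\top$ alone.

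Concretely, I would first instantiate the schema $A\To\top$ twice by SUB: once directly as $A\To\top$, and once, substituting $\neg A$ for the propositional variable, as $\neg A\To\top$. Applying the rule of Adjunction to these two theses yields $(A\To\top)\land(\neg A\To\top)$. This formula is an instance of the antecedent of \textsc{Super-Bot-Aristotle 2nd} — precisely the instance in which the shared consequent $B$ is taken to be $\top$ — so $((A\To\top)\land(\neg A\To\top))\To\bot$ is available, and one application of MP then delivers $\bot$.

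I do not anticipate any genuine obstacle: this is a short derivation closely parallel to the proof of the proposition ``Conjunction elimination, the rule of Adjunction, Super-Bot-Abelard, MP and SUB proves $\bot$'', with the two instances of conjunction elimination replaced by the two instances of $A\To\top$ and \textsc{Super-Bot-Abelard} replaced by \textsc{Super-Bot-Aristotle 2nd}. The only point that calls for a little care is the bookkeeping of instances: one must read $A\To\top$ schematically, so that $\neg A\To\top$ counts as one of its instances, and one must select the instance of \textsc{Super-Bot-Aristotle 2nd} whose common consequent is $\top$ (rather than, say, $A$) when combining by MP.
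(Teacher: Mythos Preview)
Your proposal is correct and follows exactly the same route as the paper: instantiate $A\To\top$ twice (once with $A$, once with $\neg A$), adjoin, and apply MP with the $B{=}\top$ instance of \textsc{Super-Bot-Aristotle 2nd} to obtain $\bot$.
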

\begin{proof}
The proof runs as follows.
		
1 \ \ \ $A\To \top$ \hfill[$A\To \top$]
		
2 \ \ \ $\neg A\To \top$ \hfill[$A\To \top$]

3 \ \ \ $(A\To \top)\land (\neg A\To \top)$ \hfill[1, 2, Adj.]
		
4 \ \ \ $\bot$ \hfill[3, Super-Bot-Aristotle 2nd, MP]

\noindent This completes the proof.
\end{proof}
	
\begin{remark}
By replacing $\top$ by $\neg\bot$, we can see that $A\To \neg\bot$, the rule of Adjunction, Super-Bot-Aristotle 2nd, MP and SUB proves $\bot$.
\end{remark}
	
Finally, if we consider a language with disjunction, then we obtain the following result.
\begin{proposition}
Disjunction introduction, the rule of Adjunction, Super-Bot-Aristotle 2nd, MP and SUB proves $\bot$.
\end{proposition}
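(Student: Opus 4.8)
The plan is to run exactly the same template used in the preceding propositions for \textsc{Super-Bot-Aristotle 2nd}, but now feeding it a provable ``excluded middle''-style disjunction in place of $\top$. Recall that \textsc{Super-Bot-Aristotle2} asserts the validity of $((A\To B)\land(\neg A\To B))\To\bot$; so it is enough to exhibit a single formula $D$ together with a formula $C$ such that both $C\To D$ and $\neg C\To D$ are theorems, then glue them with Adjunction and detach $\bot$ by MP.

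The obvious choice is $C:=A$ and $D:=A\lor\neg A$. The two forms of disjunction introduction, $\varphi\To(\varphi\lor\psi)$ and $\psi\To(\varphi\lor\psi)$, immediately give $A\To(A\lor\neg A)$ and $\neg A\To(A\lor\neg A)$. Hence the derivation reads: (1) $A\To(A\lor\neg A)$ by disjunction introduction; (2) $\neg A\To(A\lor\neg A)$ by disjunction introduction; (3) $(A\To(A\lor\neg A))\land(\neg A\To(A\lor\neg A))$ from (1), (2) by Adjunction; (4) $\bot$ from (3) and the instance of \textsc{Super-Bot-Aristotle2} with $B:=A\lor\neg A$, by MP. SUB enters only in the implicit instantiation of the \textsc{Super-Bot-Aristotle2} schema.

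I expect essentially no obstacle here: unlike the \textsc{Super-Aristotle}/\textsc{Super-Boethius} cases, there is no need for a self-application trick or for any auxiliary principle such as DNE, DNI, or Transitivity — a single application of Adjunction and a single MP already produce $\bot$. The only point worth flagging is that, just as in the \textsc{Super-Bot-Abelard} results, the argument genuinely relies on Adjunction to combine the two disjunction-introduction instances into the required conjunctive antecedent; dropping Adjunction (or dropping one of the two directions of disjunction introduction) blocks the derivation, which is the natural caveat to record after the proof.
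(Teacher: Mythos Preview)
Your proof is correct and matches the paper's own argument line for line: instantiate disjunction introduction twice to obtain $A\To(A\lor\neg A)$ and $\neg A\To(A\lor\neg A)$, adjoin, and detach $\bot$ via \textsc{Super-Bot-Aristotle2} and MP. Your remarks on the roles of SUB and Adjunction are accurate and go slightly beyond what the paper explicitly notes.
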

\begin{proof}
The proof runs as follows.
	
1 \ \ \ $A\To (A{\lor} \neg A)$ \hfill[Disj. intro.]
	
2 \ \ \ $\neg A\To (A{\lor} \neg A)$ \hfill[Disj. intro.]

3 \ \ \ $(A\To (A{\lor} \neg A))\land (\neg A\To (A{\lor} \neg A))$ \hfill[1, 2, Adj.]
	
4 \ \ \ $\bot$ \hfill[3, Super-Bot-Aristotle 2nd, MP]

\noindent This completes the proof.
\end{proof}
	
\section{Super-Bot-Connexivity, Aristotle and Boethius} \label{SuperBot AT BT}

Let us, to bring things to a preliminary end, consider how the super-bot-connexive theses relate to Aristotle and Boethius. The claim was that the former capture the intuitions that motivate the latter, in that sense one might assume that there are strong relations between them that rely on well-established principles only. On the other hand, if they turn out to be only loosely connected, there might be a chance to find systems that only satisfy the super-bot-connexive theses and so leaves some air to get a maximal amount of other desirable properties.\footnote{This might be seen as a form of \textit{Kapsner-strong connexivity} in the terminology of \cite{Estrada-G0nzalez2016}.}
	
How close is this connection, then? There is an interesting way in which we can make it extremely close, indeed. Take another look at Aristotle and Super-Bot-Aristotle: 
\begin{description}
\item [{\textsc{Aristotle}:}]  $\neg (A\To\neg A)$
\item [{\textsc{Super-Bot-Aristotle}:}]  $(A\To\neg A)\To \bot$
\end{description}
Juxtaposing the principles like that might remind you of the intuitionists' handling of negation. Indeed, Super-Bot-Aristotle just looks like Aristotle, where the outer negation is understood along intuitionistic lines. Let us push this line a bit further: What if we adopt that view of negation wholesale and consider $\neg A$ as defined as $A\To\bot$?
	
It is interesting to note that connexivity has been investigated in relation to many kinds of negation, but no discussion we are aware of relates the connexive principles to the intuitionistic understanding of negation.\footnote{A referee points out that this is not only the understanding of intuitionistic, but also of minimal logicians. We are happy to acknowledge that, but, at the same time, decided to keep using our term here, simply because intuitionistic negation is both historically prior and more well known than minimal negation.} This is a very interesting lacuna that we intend to explore in more detail, starting in \cite{AM3}. Here we just want to remark that, given the intuitionistic understanding of negation, Aristotle and Super-Bot-Aristotle are the very same thing, and the same goes for Boethius and Super-Bot-Boethius.
	
But consider the case in which we don't think of negation this way. What can we say then? Here are some observations.
	
\begin{proposition}
S$\bot$A1, MP and SUB, together with $\neg\bot$ and the rule of Contraposition proves AT1.
\end{proposition}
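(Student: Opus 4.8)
The plan is to run a short three-step derivation, exploiting the fact that \textsc{Super-Bot-Aristotle} already has the shape $(A\To\neg A)\To\bot$, which is just one contraposition step away from Aristotle's first thesis.

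First I would write down the relevant instance of S$\bot$A1, namely $(A\To\neg A)\To\bot$. Next I would apply the rule of Contraposition (in the form that takes $X\To Y$ to $\neg Y\To\neg X$) to obtain $\neg\bot\To\neg(A\To\neg A)$. Then, using the assumed theorem $\neg\bot$ together with MP, I would detach the consequent and conclude $\neg(A\To\neg A)$, which is exactly AT1. Spelled out:

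\medskip

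\noindent 1 \ \ \ $(A\To\neg A)\To\bot$ \hfill[S$\bot$A1]

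\noindent 2 \ \ \ $\neg\bot\To\neg(A\To\neg A)$ \hfill[1, Contraposition]

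\noindent 3 \ \ \ $\neg\bot$ \hfill[assumption]

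\noindent 4 \ \ \ $\neg(A\To\neg A)$ \hfill[2, 3, MP]

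\medskip

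I do not expect any real obstacle here; the only point requiring care is making explicit which direction of Contraposition is being used (from $X\To Y$ to $\neg Y\To\neg X$) and noting that $\neg\bot$ must be available as a theorem, not merely as a semantic fact, so that the final MP step goes through. SUB is invoked only implicitly, to instantiate the schematic S$\bot$A1 at whatever $A$ one wishes.
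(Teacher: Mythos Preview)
Your proof is correct and matches the paper's own proof essentially line for line: instantiate S$\bot$A1, apply Contraposition to get $\neg\bot\To\neg(A\To\neg A)$, and then detach with $\neg\bot$ via MP. There is nothing to add.
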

\begin{proof}
The proof runs as follows.
		
1 \ \ \ $(A\To \neg A)\To \bot$ \hfill[S$\bot$A1]
		
2 \ \ \ $\neg \bot\To \neg (A\To \neg A)$ \hfill[1, Contra.]

3 \ \ \ $\neg \bot$ \hfill[$\neg\bot$]
		
4 \ \ \ $\neg (A\To \neg A)$ \hfill[2, 3, MP]

\noindent This completes the proof.
\end{proof}
	
\begin{remark}
Note that our assumption is not trivializing in view of the truth table for {\bf CC1}.
\end{remark}
	
By a very similar proof, we also obtain the following.
	
\begin{proposition}
S$\bot$A2, MP and SUB, together with $\neg\bot$ and the rule of Contraposition proves AT2.
\end{proposition}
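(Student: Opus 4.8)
The plan is to mimic the proof of the previous proposition (the one establishing AT1 from S$\bot$A1), simply swapping the roles of the two halves of Super-Bot-Aristotle. Recall that AT2 is the thesis $\neg(\neg A\To A)$, and S$\bot$A2 is the schema $(\neg A\To A)\To\bot$.

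First I would instantiate S$\bot$A2 to get line~1: $(\neg A\To A)\To\bot$. Then I would apply the rule of Contraposition to this to obtain line~2: $\neg\bot\To\neg(\neg A\To A)$. Next, line~3 is just the assumed axiom $\neg\bot$. Finally, an application of MP to lines~2 and~3 yields line~4: $\neg(\neg A\To A)$, which is exactly AT2. So the proof is four lines, structurally identical to the S$\bot$A1/AT1 case, with $(A\To\neg A)$ uniformly replaced by $(\neg A\To A)$.

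There is essentially no obstacle here: the argument is routine and uses only the stated ingredients (S$\bot$A2, MP, SUB, the axiom $\neg\bot$, and Contraposition). The only thing worth flagging for the reader is the same caveat as in the previous remark, namely that this bundle of assumptions is non-trivializing — which can again be witnessed by the truth tables for {\bf CC1}, where $\neg\bot$ takes the designated value $1$ while none of the triviality-inducing formulas become valid. If anything merits a word of care, it is just making sure the direction of Contraposition is the one that sends $X\To Y$ to $\neg Y\To\neg X$ (rather than $\neg X\To\neg Y$), since that is what is needed to land $\neg\bot$ in antecedent position so that MP with the axiom $\neg\bot$ can fire.
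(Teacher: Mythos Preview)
Your proposal is correct and is exactly the proof the paper has in mind: the paper simply says ``by a very similar proof'' to the S$\bot$A1/AT1 case, and your four-line argument replacing $(A\To\neg A)$ by $(\neg A\To A)$ is precisely that similar proof. Your additional remarks about the direction of Contraposition and about non-trivialization via {\bf CC1} are accurate and in line with the paper's own comments.
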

	
Moreover, the same trick works for Abelard's thesis as well as Aristotle's 2nd thesis.
	
\begin{proposition}
Super-Bot-Abelard, MP and SUB, together with $\neg\bot$ and the rule of Contraposition proves Abelard's thesis.
\end{proposition}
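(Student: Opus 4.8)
The plan is to copy, essentially verbatim, the three-line argument used just above to derive AT1 from S$\bot$A1 together with $\neg\bot$ and the rule of Contraposition. The point is purely structural: Super-Bot-Abelard stands to Abelard's thesis exactly as S$\bot$A1 stands to AT1, the only difference being that the role previously played by the formula $A\To\neg A$ is now played by the conjunction $(A\To B)\land(A\To\neg B)$. So I expect no new ideas to be needed, and in particular no substitution into the consequent (of the kind driving the earlier triviality proofs) is required.

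Concretely, I would proceed as follows. First, take the relevant instance of Super-Bot-Abelard, namely $((A\To B)\land(A\To\neg B))\To\bot$. Next, apply the rule of Contraposition to obtain $\neg\bot\To\neg((A\To B)\land(A\To\neg B))$. Then invoke the assumed theorem $\neg\bot$, and a single application of MP yields $\neg((A\To B)\land(A\To\neg B))$, which is precisely Abelard's thesis. SUB is used only in the background, to license reading all of these principles as schemata.

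There is really no obstacle in the derivation itself; the mild subtlety — parallel to the \emph{Remark} following the AT1 proposition — is that one should note that this bundle of assumptions is not already trivializing. That can be seen by exhibiting a model (such as the {\bf CC1} matrices, in which $\neg\bot$, Contraposition and Super-Bot-Abelard all hold) that nonetheless falsifies some non-theorem, e.g. $(p\To q)\To(q\To p)$, as already observed in the earlier remark on {\bf CC1}. This consistency check, however, lies outside what the proposition itself asks for, so it would at most warrant a follow-up remark rather than being part of the proof.
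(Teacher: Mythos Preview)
Your proposal is correct and matches the paper's proof essentially line for line: instantiate Super-Bot-Abelard, contrapose to get $\neg\bot\To\neg((A\To B)\land(A\To\neg B))$, then detach with $\neg\bot$ via MP. Your side comment about the non-triviality check via the {\bf CC1} matrices is apt and indeed parallels the earlier \emph{Remark}, though the paper does not repeat it here.
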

\begin{proof}
The proof runs as follows.
	
1 \ \ \ $((A\To B)\land(A\To \neg B))\To \bot$ \hfill[Super-Bot-Abelard]
	
2 \ \ \ $\neg \bot\To \neg ((A\To B)\land(A\To \neg B))$ \hfill[1, Contra.]

3 \ \ \ $\neg \bot$ \hfill[$\neg\bot$]
	
4 \ \ \ $\neg ((A\To B)\land(A\To \neg B))$ \hfill[2, 3, MP]

\noindent This completes the proof.
\end{proof}
	
Again, by a very similar proof, we also obtain the following.
\begin{proposition}
Super-Bot-Aristotle 2nd, MP and SUB, together with $\neg\bot$ and the rule of Contraposition proves Aristotle's 2nd thesis.
\end{proposition}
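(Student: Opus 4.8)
The plan is to run exactly the same argument as in the preceding proposition for Super-Bot-Abelard, merely replacing the conjunction $(A\To B)\land(A\To\neg B)$ with $(A\To B)\land(\neg A\To B)$ throughout. So first I would take \textsc{Super-Bot-Aristotle 2nd} in the form $((A\To B)\land(\neg A\To B))\To\bot$ as the starting line. Then I would apply the rule of Contraposition to this conditional, obtaining $\neg\bot\To\neg((A\To B)\land(\neg A\To B))$. Finally, invoking the assumption $\neg\bot$ and applying MP to these two lines yields $\neg((A\To B)\land(\neg A\To B))$, which is precisely Aristotle's 2nd thesis.

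Concretely, the proof would read: line 1 is $((A\To B)\land(\neg A\To B))\To\bot$ by \textsc{Super-Bot-Aristotle 2nd}; line 2 is $\neg\bot\To\neg((A\To B)\land(\neg A\To B))$ from line 1 by Contraposition; line 3 is $\neg\bot$; line 4 is $\neg((A\To B)\land(\neg A\To B))$ from lines 2 and 3 by MP. SUB is not actually exercised in the derivation itself (just as in the Abelard case it only guarantees we may instantiate the schema), so the essential ingredients are Contraposition, the theorem $\neg\bot$, and MP.

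There is no real obstacle here: the argument is a verbatim structural copy of the Super-Bot-Abelard proof, and the only thing worth remarking is that it works uniformly for \emph{any} principle of the shape $\varphi\To\bot$ — contraposing against $\neg\bot$ turns a super-bot-connexive thesis into the corresponding plain unsatisfiability thesis $\neg\varphi$. If anything, the point to flag is conceptual rather than technical: this shows that, modulo the very mild resources $\neg\bot$ and Contraposition, each super-bot-connexive principle already entails its classical counterpart, so the gap between the two families of principles is genuinely small whenever these resources are present (and, as the earlier remark about {\bf CC1} indicates, the combined assumptions remain non-trivial).

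\begin{proof}
The proof runs as follows.

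1 \ \ \ $((A\To B)\land(\neg A\To B))\To \bot$ \hfill[Super-Bot-Aristotle 2nd]

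2 \ \ \ $\neg \bot\To \neg ((A\To B)\land(\neg A\To B))$ \hfill[1, Contra.]

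3 \ \ \ $\neg \bot$ \hfill[$\neg\bot$]

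4 \ \ \ $\neg ((A\To B)\land(\neg A\To B))$ \hfill[2, 3, MP]

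\noindent This completes the proof.
\end{proof}
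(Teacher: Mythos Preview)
Your proof is correct and is exactly the argument the paper intends: the paper does not even spell it out, saying only that ``by a very similar proof'' the result follows from the Super-Bot-Abelard case, and your four-line derivation is precisely that similar proof.
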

	
It is less obvious for Boethius theses, but we do have the following results.
	
\begin{proposition}
S$\bot$B1, MP and SUB, together with $\neg\bot$ and the rule of Contraposition proves the rule form of BT1, namely $A\To B\vdash \neg (A\To \neg B)$.
\end{proposition}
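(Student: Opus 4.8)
The plan is to reuse, essentially verbatim, the "contraposition trick" that delivered AT1, AT2, Abelard's thesis, and Aristotle's second thesis from their super-bot counterparts — with the one difference that the starting point is now the \emph{hypothesis} $A\To B$ rather than a validity, which is exactly why only the rule form is obtainable. Concretely: assume $A\To B$; take the instance $(A\To B)\To((A\To\neg B)\To\bot)$ of S$\bot$B1 and detach it against the hypothesis by MP, obtaining $(A\To\neg B)\To\bot$; apply the rule of Contraposition to this conditional to get $\neg\bot\To\neg(A\To\neg B)$; finally detach against the assumed $\neg\bot$ by MP to conclude $\neg(A\To\neg B)$. So the derivation is four short lines, structurally identical to the proof of the Super-Bot-Abelard result above, with $(A\To B)$ playing the role there played by the valid premise.

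I do not anticipate a genuine obstacle here — this is a routine adaptation of the earlier arguments, and the symmetric case (S$\bot$B2 yielding the rule form of BT2, i.e.\ $A\To\neg B\vdash\neg(A\To B)$) goes through in exactly the same way. The only conceptual point worth flagging is why we must remain at the level of a rule: because $A\To B$ enters as a hypothesis and not as a theorem, SUB cannot be applied to it, and internalizing the inference as the axiom $(A\To B)\To\neg(A\To\neg B)$ would require something like the deduction theorem, which has already been shown to be in tension with the super-bot principles.

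As in the remark following the S$\bot$A1 result, one should also record that the background assumptions — $\neg\bot$, the rule of Contraposition, and S$\bot$B1 — are jointly non-trivializing, which is witnessed by {\bf CC1}; so the proposition genuinely establishes a connection between the super-bot principles and (the rule form of) Boethius rather than merely exploiting an inconsistent theory.
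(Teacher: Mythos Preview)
Your proof is correct and matches the paper's own proof essentially step for step: assume $A\To B$, apply MP with S$\bot$B1 to obtain $(A\To\neg B)\To\bot$, contrapose to $\neg\bot\To\neg(A\To\neg B)$, and detach with $\neg\bot$. The additional commentary you provide (why only the rule form follows, and the appeal to {\bf CC1} for non-triviality) is accurate and in the spirit of the paper's surrounding remarks.
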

\begin{proof}
The proof runs as follows.
	
1 \ \ \ $A\To B$ \hfill[sup.]
	
2 \ \ \ $(A\To \neg B)\To \bot$ \hfill[1, S$\bot$B1, MP]
	
3 \ \ \ $\neg \bot\To \neg (A\To \neg B)$ \hfill[2, Contra.]

4 \ \ \ $\neg \bot$ \hfill[$\neg\bot$]
	
5 \ \ \ $\neg (A\To \neg B)$ \hfill[3, 4, MP]

\noindent This completes the proof.
\end{proof}
	
By a very similar proof, we also obtain the following.
\begin{proposition}
S$\bot$B2, MP and SUB, together with $\neg\bot$ and the rule of Contraposition proves the rule form of BT2, namely $A\To \neg B\vdash \neg (A\To B)$.
\end{proposition}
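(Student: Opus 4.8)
The plan is to mimic, step for step, the proof just given for the S$\bot$B1/BT1 case, swapping the roles of $A\To B$ and $A\To\neg B$ to match the shape of S$\bot$B2. Recall that S$\bot$B2 is the schema $(A\To\neg B)\To((A\To B)\To\bot)$, so it is tailored to discharge a hypothesis of the form $A\To\neg B$ rather than $A\To B$.

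Concretely, I would proceed as follows. First, assume $A\To\neg B$ as the premise of the rule we are trying to derive. Second, apply S$\bot$B2 together with MP to obtain $(A\To B)\To\bot$. Third, apply the rule of Contraposition to this conditional to get $\neg\bot\To\neg(A\To B)$. Fourth, invoke the assumption $\neg\bot$. Fifth, conclude $\neg(A\To B)$ by MP. Since the only assumption used beyond the logical machinery (MP, SUB, Contraposition, $\neg\bot$) was $A\To\neg B$, this establishes the rule form $A\To\neg B\vdash\neg(A\To B)$, i.e. BT2.

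There is essentially no obstacle here: the argument is a verbatim transposition of the preceding proposition's proof, and no new principle is needed. The only point worth a second's attention is making sure that the instance of S$\bot$B2 one uses really is $(A\To\neg B)\To((A\To B)\To\bot)$ (and not its SA variant), so that MP against the hypothesis $A\To\neg B$ lands on $(A\To B)\To\bot$, which is exactly the conditional Contraposition then turns into $\neg\bot\To\neg(A\To B)$. With that in place the derivation closes immediately, and — as with the S$\bot$B1 case — one may note in passing that the assumptions remain non-trivializing, witnessed by {\bf CC1}.
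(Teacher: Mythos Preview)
Your proof is correct and is precisely the ``very similar proof'' the paper alludes to: assume $A\To\neg B$, detach with S$\bot$B2 to get $(A\To B)\To\bot$, contrapose, and use $\neg\bot$ with MP to conclude $\neg(A\To B)$. There is nothing to add.
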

	
\begin{remark}
The above results only establish the rule form of BT1 and BT2, not the arrow form we are focusing on in this piece. However, it is worth noting that these are sometimes also referred to as Boethius theses in the literature (see e.g. \cite[p.416]{mccall2012history}). Moreover, this is basically the form of Boethius theses discussed by Claudio Pizzi since \cite{Pizzi1977}, and also by Priest in \cite{Priest1999}. In fact, this is even the case in some of the systems following the recipe suggested by Wansing in \cite{Wansing2005} that validates BT1 and BT2 by tweaking the falsity condition for various kinds of conditional, even with very weak ones (cf. \cite{Omori2016Unilog5,wansing2019connexive}). See, for example, the main system discussed in \cite{kapsner2017counterfactuals} as well as the system {\bf cCL} in \cite{wansing2019connexive}.
\end{remark}

\section{Summary}\label{Summary}
	
In this paper, we proposed to take a fresh look at the notion of superconnexivity. In the form it was originally proposed, it has severe problems, and we spent some time pinpointing them more carefully than had previously been done (\S\ref{Problems}).
	
However, we were able to make a suggestion that holds much more promise, namely to replace the arbitrary formula in the superconnexive theses by a bottom constant (\S\ref{SuperBot}). Properly constrained (but not too much), this move to what we call super-bot-connexivity has a credible claim to express the idea of strong connexivity. We pointed out some of the additional costs this move incurs, some of which might not be too unpleasant for certain philosophical outlooks (e.g. the loss of Weakening for relevantists, or the loss of Peirce's law for constructivists).
	
We spent \S\ref{Explosion} on taking a deeper look at the connections between the new proposal and the principle of Explosion, and went on in \S\ref{Abelard} to broaden our proposal to also include Super-Bot verions of Abelard and Aristotle's second thesis. Lastly, in \S\ref{SuperBot AT BT}, we explored how the new principles interact with the original connexive principles, Aristotle and Boethius.
	
In sum, it seems to us that super-bot-connexivity is a useful and interesting idea that has some claim to capture the ideas of the unsatisfiability requirements in the object language itself, which was the original aim of superconnexivity. Whether it really does will require more philosophical analysis than we were able to provide here, and we leave that for future research.

There are a number of other directions to explore, as well. We will here only mention two of them. First, given that we are now able to capture the unsatisfiability requirements in the object language, we can revisit various ways to capture Humble connexivity, introduced in \cite{Kapsner2019humble}. Second, we may consider the intuitionist's way of dealing with negation in the context of connexive logic, as we briefly touched on in \S\ref{SuperBot AT BT}. 
	
It remains to be seen how rich the idea of super-bot-connexivity is, and we hope some readers will be motivated to join the authors to continue with the development of connexive logics along these lines.



	\bibliographystyle{eptcs}
	\bibliography{biblio}
\end{document}